\documentclass[11pt]{article}

\usepackage[utf8]{inputenc}
\usepackage[english]{babel}
\usepackage{amsmath,amsthm,amssymb}
\usepackage{algorithm}
\usepackage{graphicx}
\usepackage{caption}
\usepackage{subcaption}
\usepackage[colorlinks=true,linkcolor=blue,citecolor=blue,urlcolor=blue]{hyperref}
\usepackage[capitalize,noabbrev]{cleveref}
\usepackage{geometry}
\pdfoutput=1

\newtheorem{theorem}{Theorem}[section]
\newtheorem{lemma}[theorem]{Lemma}
\newtheorem{corollary}[theorem]{Corollary}

\newenvironment{steplist}{\begin{enumerate}}{\end{enumerate}}

\renewcommand{\epsilon}{\varepsilon}
\renewcommand{\phi}{\varphi}
\DeclareMathOperator{\symdiff}{\triangle}

\usepackage{refcount}
\newcounter{temptheorem}

\title{Near-Linear Time Computation of Welzl Orders on Graphs with Linear Neighborhood Complexity}

\author{
Jan Dreier\\
TU Wien, Austria\\
\texttt{dreier@ac.tuwien.ac.at}
\and
Clemens Kuske\\
TU Wien, Austria\\
\texttt{mail@clemens-kuske.de}
}

\date{}

\begin{document}
\maketitle

\begin{abstract}
Orders with low crossing number, introduced by Welzl, are a fundamental tool in range searching and computational geometry. Recently, they have found important applications in structural graph theory: set systems with linear shatter functions correspond to graph classes with linear neighborhood complexity. 
For such systems, Welzl's theorem guarantees the existence of orders with only $\mathcal{O}(\log^2 n)$ crossings. 
A series of works has progressively improved the runtime for computing such orders, from Chazelle and Welzl's original $\mathcal{O}(|U|^3 |\mathcal{F}|)$ bound, through Har-Peled's $\mathcal{O}(|U|^2|\mathcal{F}|)$, to the recent sampling-based methods of Csikós and Mustafa.

We present a randomized algorithm that computes Welzl orders for set systems with linear primal and dual shatter functions in time $\mathcal{O}(\|S\| \log \|S\|)$, where $\|S\| = |U| + \sum_{X \in \mathcal{F}} |X|$ is the size of the canonical input representation.
As an application, we compute compact neighborhood covers in graph classes with (near-)linear neighborhood complexity in time \(\mathcal{O}(n \log n)\) and improve the runtime of first-order model checking on monadically stable graph classes from $\mathcal{O}(n^{5+\varepsilon})$ to $\mathcal{O}(n^{3+\varepsilon})$.
\end{abstract}

\section{Introduction}

Orders with low crossing number, first introduced by Welzl~\cite{Wel88,Wel92} in the area of range searching and combinatorial geometry, have proven to be a remarkably versatile tool across many algorithmic domains. Given a set system $\mathcal{S} = (U, \mathcal{F})$ consisting of a ground set $U$ and a family of subsets $\mathcal{F}$ over the ground set, the \emph{crossing number} of a total order on $U$ with respect to a set $X \in \mathcal{F}$ counts the number of adjacent pairs $(u, u')$ in the order such that exactly one of $u$ and $u'$ belongs to $X$. The crossing number of the order is the maximum crossing number over all sets in $\mathcal{F}$. We call an order achieving a small crossing number a \emph{Welzl order}.

The existence and computability of Welzl orders is intimately connected to Vapnik-Chervonenkis (VC) theory.
The \emph{shatter function} $\pi_\mathcal{F}(n)$ of a set system $(U, \mathcal{F})$ is defined as \[
\pi_\mathcal{F}(n) := \max_{A \subseteq U, |A| \leq n} |\{X \cap A : X \in \mathcal{F}\}|,
\]
measuring the maximum number of distinct traces that sets from $\mathcal{F}$ can leave on subsets of size~$n$.
Denote by \(\pi^*_\mathcal{F}(n)\) the shatter function of the dual\footnote{
The dual of a set system is the set system obtained by switching the roles of ground set and set family. Formally, the dual of $(U, \mathcal{F})$ is $(\mathcal{F}, \{X^*_{u} \mid u \in U\})$, where $X^*_{u} = \{X \in \mathcal{F} \mid u \in X\}$.
} set system.
The foundational connection between shatter functions and low-crossing orders is:
\begin{theorem}[{\cite{Wel88,CW89,Wel92}}]\label{thm:welzl}
Let $\mathcal{S} = (U, \mathcal{F})$ be a set system with dual shatter
function \(\pi^*_\mathcal{F}(k) = \mathcal{O}(k^d)\) for some \(d \ge 1\).
Then there exists a total order on $U$ with crossing number $\mathcal{O}(|U|^{1-1/d} \cdot \log |U|)$,
and \(\mathcal{O}(\log^2 |U|)\) if \(d=1\).
\end{theorem}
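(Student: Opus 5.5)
We will prove the existence of a low-crossing order by the classical iterative reweighting (multiplicative weights) argument of Welzl and Chazelle--Welzl. The plan is to build a spanning path on $U$ with low crossing number and then read off the order along the path. It is convenient to first construct a spanning \emph{tree} with low crossing number, where an edge $\{u,v\}$ is \emph{crossed} by $X$ if $|X\cap\{u,v\}|=1$. A depth-first traversal of the tree then yields a walk crossing each set at most twice as often as the tree. Shortcutting repeated vertices gives a Hamiltonian path, and a shortcut edge between two vertices that $X$ separates forces some tree edge on the walk between them to be crossed. So the crossing number of the order is at most twice that of the tree.

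We build the tree greedily over $n=|U|$ rounds. We maintain weights $w(X)$ on the sets, initially $1$; since the set system may have repeated traces, we work with the distinct traces, whose number is at most $\pi_{\mathcal F}(n)$. Polynomial boundedness of that number is the only thing needed for the final logarithm. In round $i$ the current forest has $n-i+1$ components. We pick one representative per component, forming a point set $P$ with $|P|=m$, and look for a pair $u,v\in P$ such that the total weight of sets separating $u$ and $v$ is small. We add that edge, doubling the weight of every set crossing it.

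The key lemma is that such a pair always exists with separating weight at most $O(W/m^{1/d})$, where $W$ is the total weight. To see this, consider the dual system restricted to $P$, namely the points of $P$ described by which sets contain them. Take a weighted $\varepsilon$-net / cutting argument: sample a multiset $R$ of about $r$ sets according to the weights. The points of $P$ fall into at most $\pi^*_{\mathcal F}(r)=O(r^d)$ classes by their trace on $R$. Choosing $r\approx c\, m^{1/d}$ makes this fewer than $m$ classes, so two points $u,v$ share a class and are not separated by any set of $R$. By the $\varepsilon$-net theorem for the dual system, whose VC dimension is bounded since its shatter function is polynomial, any pair separated by weight more than $\varepsilon W$ with $\varepsilon=O((\log r)/r)$ would be hit by $R$ with positive probability. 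This gives separating weight $O(W\log m/m^{1/d})$. A more careful counting of pairs instead of a net yields the cleaner $O(W/m^{1/d})$: summing over all pairs within classes, the expected number of sets separating a random same-class pair is controlled. Either bound suffices up to log factors.

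Finally, we run the standard potential argument. Each round multiplies $W$ by at most $1+O(1/m_i^{1/d})$ with $m_i=n-i+1$. Hence the final weight satisfies $W_{\text{end}}\le W_0\exp\big(\sum_m O(m^{-1/d})\big)$. The sum is $O(n^{1-1/d})$ for $d>1$ and $O(\log n)$ for $d=1$ (or $O(\log^2 n)$ and $n^{1-1/d}\log n$ if we use the net-based bound). A set crossed $k$ times has weight $2^k\le W_{\text{end}}$, and $\log W_0=O(\log n)$. This gives $k=O(n^{1-1/d}\log n)$, and $O(\log^2 n)$ for $d=1$ when carrying the $\log m$ factor. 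The main obstacle is the key lemma in step two: getting the right dependence on $m$ with only a dual shatter bound, and handling weights and multisets correctly in the sampling argument. I would first try the weighted $\varepsilon$-net route, since it already gives the stated bounds including the extra logarithm.
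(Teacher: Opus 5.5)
The paper gives no proof of this statement. It is quoted from Welzl and Chazelle--Welzl as background, so there is no in-paper argument to compare against. Your proposal is the classical argument from those sources: a greedy spanning tree with multiplicative reweighting, a light-pair lemma, a potential bound, then doubling the tree walk and shortcutting. With the net-based bound $O(W\log m/m^{1/d})$ per round it yields exactly the stated $O(n^{1-1/d}\log n)$ and $O(\log^2 n)$. The tree-to-path step with factor $2$ is also correct.

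Three places need tightening.
\begin{enumerate}
\item The step $\log W_0=O(\log n)$ needs $\mathcal{F}$ to have polynomially many distinct sets, which you assume rather than derive. It follows because $\pi^*_{\mathcal F}(k)=\mathcal{O}(k^d)$ bounds the dual VC-dimension $d^*$. The primal VC-dimension is then below $2^{d^*+1}$, and Sauer--Shelah gives $\pi_{\mathcal F}(n)=n^{\mathcal{O}(1)}$.
\item The $\varepsilon$-net must be taken for the range space of separators $\{X : |X\cap\{u,v\}|=1\}$, $u,v\in P$, whose shatter function is at most $\pi^*_{\mathcal F}(k)^2$, not for the dual system itself. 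Simpler still, you can drop the $\varepsilon$-net theorem and use a plain union bound over the $\binom{m}{2}$ pairs of $P$, in the spirit of \Cref{lemma:twin_class_occurrence_probability}. A fixed pair of separating weight at least $\varepsilon W$ survives $r$ independent weighted draws with probability at most $e^{-\varepsilon r}$. So $\varepsilon=2\ln m/r$, with $r=\Theta(m^{1/d})$ chosen so that $\pi^*_{\mathcal F}(r)<m$, already gives a light pair in a common class.
\item The remark that ``more careful counting of pairs'' gives $O(W/m^{1/d})$ is not an argument; that sharper bound is Haussler's packing lemma. You never need it, though.
\end{enumerate}

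For comparison, the paper's own \Cref{thm:main} reaches an $O(\log^2|U|)$ order for $d=1$ by a different mechanism. It uses no weights and no spanning tree. Instead it runs $O(\log|U|)$ rounds of sampling a constant fraction of the current ground set and contracting twin classes on both sides. Each round costs $2k$ crossings via \Cref{lemma:2k-increment} and \Cref{lemma:suspend-twins}. That route also needs the primal shatter bound, both to halve the ground set each round and to bound the number of pairs in its union bound. It degrades to exponent $1-1/d^2$ for $d\ge 2$, but runs in near-linear time. Your reweighting argument needs only the dual bound and gives exponent $1-1/d$. The cost is $n-1$ rounds, each requiring a global search for a light pair.
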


\paragraph*{Prior work}
There is a large body of work dedicated to finding fast algorithms for computing Welzl orders\footnote{Some prior work computes matchings or trees instead of orders, but one can translate between these settings with only logarithmic overhead in the crossing number.} satisfying the guarantees of \Cref{thm:welzl}.
Chazelle and Welzl's original algorithm~\cite{CW89,Wel92} is based on an iterative weight-doubling technique and has a runtime of $\mathcal{O}(|U|^3 |\mathcal{F}|)$.
Har-Peled~\cite{HP09} presents an alternative approach based on linear programming to further reduce the runtime to $\mathcal{O}(|U|^2 |\mathcal{F}|)$ (see also~\cite{FedLub04}).
Recently, Csikós and Mustafa~\cite{CM21} used sampling-based methods
to obtain a runtime of $\tilde{\mathcal{O}}\big(|\mathcal{F}| \cdot |U|^{2/d} + |U|^{2+2/d}\big)$, improving previous runtimes for the case of \(d > 1\).
In a follow-up work~\cite{CM24}, they further obtain a smooth tradeoff:
for any $\alpha \in (0,1]$,
one can compute an order with crossing number $\mathcal{O}(|U|^{1-\alpha/d} \log |U|)$
in time $\mathcal{O}(|U|^{1+\alpha + 2\alpha/d} + |\mathcal{F}| \cdot |U|^{2\alpha/d})$.
As $\alpha$ approaches zero, the runtime approaches near-linear, but so does the crossing number
(forfeiting the $\mathcal{O}(|U|^{1-1/d})$ guarantee that makes Welzl orders useful).
In particular, for $d = 1$, all prior methods require at least \(\Omega(|U|^2 |\mathcal{F}|)\) time
to achieve $\mathcal{O}(\log^2 |U|)$ crossings.

\paragraph*{Our contribution} 
Recently, Welzl orders have been proven to be highly useful in the context of structural graph theory.
Here, set systems with linear (primal and dual) shatter functions are of particular interest, in particular the case \(d=1\), where \Cref{thm:welzl} promises the existence of orders with $\mathcal{O}(\log^2 n)$ crossings.

Denote by \(\|S\| := |U| + \sum_{X \in \mathcal{F}} |X|\) the size of the canonical representation of the input set system.

\begin{theorem}\label{thm:main}
Let $\mathcal{S} = (U, \mathcal{F})$ be a set system with primal and dual shatter functions bounded by \(\pi_\mathcal{F}(k),\pi^*_\mathcal{F}(k) \le c \cdot k\) for some \(c \ge 1\).
There is a randomized algorithm that
takes \(\mathcal{S}\) and \(c\) as input and computes
in time \(\mathcal{O}(\|S\|\cdot \log \|S\|)\),
with probability \(2/3\),
a total order on \(U\) with crossing number at most $12c^2 \log^2 |U|$.
\end{theorem}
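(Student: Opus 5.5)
We follow the classical iterative Welzl/Chazelle--Welzl scheme. We build a spanning path (equivalently a matching sequence) on $U$ in $O(\log |U|)$ rounds. In each round we halve the set of "active" elements by a matching with low weighted crossing number. The multiplicative-weights argument is carried over, but every step is implemented in time near-linear in $\|S\|$.

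\textbf{Weights and rounds.} Each $X\in\mathcal{F}$ receives a weight $w(X)$, initially $1$. In a round with active set $A$, $|A|=m$, we seek a perfect (or near-perfect) matching $M$ on $A$ such that
\[
\sum_{X} w(X)\cdot\mathrm{cr}_M(X)=O\!\left(c\cdot W\cdot \tfrac{\log}{m}\right),\qquad W=\sum_X w(X).
\]
After computing $M$, we double the weight of every set crossed by the chosen pairs and recurse on one endpoint per pair. The standard potential argument (total weight grows by a factor $1+O(c\log/m)$ per edge, while a set crossed $k$ times has weight $2^k$) gives crossing number $O(c\log|U|)$ per round. Summing over $\log|U|$ rounds and translating matchings to an order costs another logarithmic factor, giving $12c^2\log^2|U|$ once the constants are tracked.

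\textbf{Key step: finding a light pair quickly.} With linear dual shatter function $\pi^*(k)\le ck$, a packing/averaging argument shows that among the $m$ active elements there is a pair $u,u'$ whose weighted symmetric difference $w(\{X: X\ni u \text{ xor } X\ni u'\})$ is at most $O(cW/m)$. The argument: the elements partition into few weighted-distinct classes, and a Haussler-style packing bound gives $O(c)$-many $\epsilon$-separated classes.

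Instead of searching exhaustively, we sample. We draw a random subfamily $R\subseteq\mathcal{F}$ of $O(c\, m\log)$ sets with probability proportional to weight. We then hash each active element by its trace on $R$, sorting by these traces with randomized fingerprints computed in time $\sum_{X\in R}|X|$. Elements with equal or adjacent traces form a light pair with good probability, by $\epsilon$-net/approximation bounds, which rely on the primal shatter bound $\pi(k)\le ck$. Greedily pairing elements within buckets, and removing matched elements in batches, yields a matching of many light pairs at once.

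\textbf{Complexity.} Each round touches every incidence $(u,X)$ a constant number of times in expectation. Weight updates per crossed set are charged to its incidences; computing the symmetric difference of a chosen pair costs $|N(u)|+|N(u')|$. Since the active set halves each round, incidences of deactivated elements can be dropped: restricting each $X$ to its active members keeps the representation size shrinking, so a geometric sum gives $O(\|S\|)$ total plus sorting costs $O(\|S\|\log\|S\|)$. Markov's inequality applied to the expected total weight yields success probability $2/3$.

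\textbf{Main obstacle.} The hard part is the batch step: processing many light pairs before updating weights, without inflating the crossing bound. Doubling weights only after a whole batch is standard in the Har-Peled LP/multiplicative-weights analysis, but making the sampled traces remain valid as weights change is subtle. We would try phases in which weights change by at most a constant factor, resampling $R$ once per phase, so that only $O(\log)$ phases occur per round.
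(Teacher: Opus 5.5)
Your proposal has a genuine gap, and it sits exactly where you place the ``main obstacle''. The weight-doubling potential argument bounds the crossing number of each individual set only because weights are updated after \emph{every} chosen pair, so a set that has already been crossed becomes expensive for the next pair. If you select a whole matching against one fixed weighting or one fixed sample $R$, nothing stops a low-weight set $X\notin R$ from containing exactly one endpoint of every pair, which gives $\mathrm{cr}_M(X)=m/2$ in that round. A small value of $\sum_X w(X)\,\mathrm{cr}_M(X)$ bounds neither $\max_X \mathrm{cr}_M(X)$ nor the new total weight $\sum_X w(X)2^{\mathrm{cr}_M(X)}$. Your phase/resampling fix is only something you ``would try'', so neither the per-round bound nor the explicit constant $12c^2$ is established. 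Your accounting is also inconsistent: $O(c\log|U|)$ per round over $\log|U|$ rounds, plus another logarithmic factor for turning matchings into an order, gives $\log^3$, not $\log^2$.

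The light-pair search also fails as stated. With $|R|=\Theta(cm\log m)$, the dual bound only limits the number of distinct traces of the $m$ active elements to $c|R|>m$, so no two elements need share a trace. Adjacency in a sorted or fingerprinted order of traces says nothing about the size of a symmetric difference. A collision is only forced when $c|R|<m$.

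The paper needs no weights. In each round it samples $\lceil |A_{\mathrm{cur}}|/(2c^2)\rceil$ \emph{elements} $W$. It groups the sets by their trace on $W$, giving at most $c|W|$ classes with representatives $B'$. It then groups the active elements into exact twin classes over $B'$, giving at most $c|B'|\le |A_{\mathrm{cur}}|/2+c^2$ classes, so the active set roughly halves and the pigeonhole step is built into the sample size. The per-set control your scheme lacks comes from an explicit linear-time check: every set must differ from its class representative on at most $k=6c^2\log|A|$ active elements. This holds with probability at least $1-c^2/|A|$ per round, because a fixed symmetric difference of size at least $k$ misses $W$ with probability at most $|A|^{-3}$, and there are fewer than $c^2|A|^2$ pairs of distinct sets. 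Reinserting exact twins costs nothing (\Cref{lemma:suspend-twins}), and replacing each set by its near-twin representative costs at most $2k$ crossings for every set (\Cref{lemma:2k-increment}). Partition refinement makes each of the at most $\log|A|$ rounds linear-time, which yields the bound $12c^2\log^2|A|$.
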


The previous best runtime in this regime was \(\mathcal{O}(|U|^2 |\mathcal{F}|)\)~\cite{HP09}.
Note that it takes at least time \(\Omega(\|S\|)\) to read the whole input instance, our runtime is within a logarithmic factor of the theoretical minimum. By standard boosting techniques, the failure probability can be made arbitrarily small.
If $c$ is unknown, we can run the boosted algorithm with increasing guesses for $c$ until it succeeds. The first successful run yields a valid order.

While our focus lies on set systems with linear shatter functions, our algorithm can easily be extended to compute Welzl orders for set systems with higher VC-dimension.
\begin{theorem}\label{thm:main2}
Let $\mathcal{S} = (U, \mathcal{F})$ be a set system with primal and dual shatter functions bounded by \(\pi_\mathcal{F}(k),\pi^*_\mathcal{F}(k) \le c \cdot k^d\) for some \(c \ge 1\) and \(d \ge 2\).
There is a randomized algorithm that
takes \(\mathcal{S}\), \(c\) and \(d\) as input and computes in time \(\mathcal{O}(\|S\|\cdot \log \|S\|)\), with probability \(2/3\), a total order on \(U\) with crossing number at most \[\mathcal{O}\Bigl(|U|^{1 - \tfrac{1}{d^2}} \cdot  \log^2 |U|\Bigr).\]
\end{theorem}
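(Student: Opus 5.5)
We plan to reuse the algorithm behind \Cref{thm:main} essentially unchanged and only redo the analysis for shatter functions bounded by $c\cdot k^d$. The order is built by the iterative reweighting scheme of Chazelle and Welzl. In each round we maintain multiplicative weights on the sets $X\in\mathcal{F}$, doubling a set's weight whenever it is crossed. We then greedily pick a ``light'' pair $(u,u')$ of remaining elements, i.e.\ one crossed by sets of small total weight. A spanning path on $U$ obtained this way is the order. The runtime bound $\mathcal{O}(\|S\|\log\|S\|)$ comes from replacing exact weight computations by random sampling. We sample a subset $R\subseteq U$ of elements, or a weighted subfamily of sets, and use the induced equivalence classes of elements (elements with identical traces on the sample) to find light pairs. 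The same data structures work for any $d$, since they only need to enumerate traces of the input. We therefore take the implementation and its runtime analysis from \Cref{thm:main} as given and only redo the combinatorial counting.

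The key steps, in order, are as follows.
\begin{enumerate}
\item \textbf{Existence of a light pair.} With weights $w$ on $\mathcal{F}$ of total $W$ and $m$ remaining elements, we use a packing/$\epsilon$-net argument on the dual system ($\pi^*\le ck^d$). This gives two elements crossed by sets of weight at most about $W\cdot c^{1/d} m^{-1/d}$, up to $\log$ factors.
\item \textbf{Sampling loss.} Our sample-based search only certifies such pairs approximately. In the $d=1$ proof the sample size $\mathcal{O}(c\log n)$ costs only constants. For general $d$ the primal bound $\pi\le ck^d$ means that a sample of size $s$ produces up to $c s^d$ classes rather than $cs$. Keeping the per-round work near linear forces $s$ to be about $m^{1/d}$. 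Pairs found inside a class are then only guaranteed to be light at resolution $m^{-1/d^2}$ instead of $m^{-1/d}$.
\item \textbf{Weight-doubling analysis.} We repeat the standard potential argument: the final weight of any set is $2^{\text{crossings}}$ and is at most the final total weight. That total is bounded by $|\mathcal{F}|\prod_{m}\bigl(1+\mathcal{O}(m^{-1/d^2}\log m)\bigr)$. Taking logarithms and summing $m^{-1/d^2}$ over $m\le |U|$ gives crossing number $\mathcal{O}(|U|^{1-1/d^2}\log^2|U|)$, using $\log|\mathcal{F}|=\mathcal{O}(d\log|U|)$ from the shatter bound.
\end{enumerate}
The success probability $2/3$ follows from a union bound over rounds, with per-round failure probability $1/\mathrm{poly}(|U|)$, exactly as before.

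The main obstacle is step~2. We must show that, with the sample size kept small enough for near-linear time, the class structure still yields a pair of relative weight $\mathcal{O}(m^{-1/d^2}\log m)$ w.h.p. We would first try a double-sampling/$\epsilon$-approximation bound in the dual. The sample of size $s\approx m^{1/d}$ is an $\epsilon$-approximation with $\epsilon\approx s^{-1/d}$ for the weighted set family. Two elements in the same class then differ only on sets of weight at most $\epsilon W$. Pigeonhole over the $\le c s^d$ classes, compared against the $m$ elements, guarantees a class with two elements. If this loses too much, the fallback is to accept an extra $\log$ factor, which the stated bound already absorbs via $\log^2|U|$.
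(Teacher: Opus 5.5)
Your proposal has a genuine gap, and it starts with a misidentification. The algorithm behind \Cref{thm:main} is not a Chazelle--Welzl reweighting scheme: it keeps no weights and never searches for light pairs. Its sample for $d=1$ is also $\lceil |A_{\mathrm{cur}}|/(2c^2)\rceil$ elements, not $\mathcal{O}(c\log n)$. So there is no implementation or runtime analysis you can ``take as given'' for the scheme you describe, and for that scheme the near-linear bound fails. Picking one light pair per round means $|U|-1$ sequential rounds. In each round, as you describe it, you draw a sample and compute the trace classes of all $m$ remaining elements, which costs $\Omega(m)$ because the weights have changed since the last round. Even granting your ``near-linear per-round work'', the total is $\Omega(\sum_m m)=\Omega(|U|^2)$. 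That is not $\mathcal{O}(\|S\|\log\|S\|)$ when $\|S\|$ is close to $|U|$, and it is exactly why the reweighting-based algorithms cited in the introduction are far from near-linear at this guarantee. Step~2 is also left unproven, and your explanation of the exponent $1/d^2$ is not tied to an actual bound. Within your own framework, a standard $\epsilon$-net argument for a weighted sample of $s\approx (m/c)^{1/d}$ sets gives uncrossed pairs of relative weight roughly $(\log s)/s\approx m^{-1/d}\log m$ (up to constants in $c,d$). So resolution is not what forces $1/d^2$; the number of rounds is the real obstruction.

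The missing idea is to drop weights entirely and instead halve the instance by contraction, $\mathcal{O}(\log|A|)$ times, as in \Cref{algorithm:welzl-orders2}. Sample $W\subseteq A_{\mathrm{cur}}$ of size $\lceil (|A_{\mathrm{cur}}|/(4c^{d+1}))^{1/d^2}\rceil$. Collapse $B_{\mathrm{cur}}$ to representatives $B'$ of its twin classes over $W$; there are at most $c|W|^d$ of them by the primal bound. Then collapse $A_{\mathrm{cur}}$ to representatives $A'$ of its twin classes over $B'$; there are at most $c(c|W|^d)^d=c^{d+1}|W|^{d^2}$ of them by the dual bound. The loop threshold $4c^{d+1}(2d^2)^{d^2}\log|A|$ forces $(|A_{\mathrm{cur}}|/(4c^{d+1}))^{1/d^2}\ge 2d^2$, which gives $|A'|\le |A_{\mathrm{cur}}|/2$. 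Each round runs in linear time by partition refinement. Composing the primal and dual bounds while still halving is where $1/d^2$ actually comes from. For the crossing number, reinserting $A$-twins next to their representatives is free by \Cref{lemma:suspend-twins}. Collapsing $B$ costs $2k$ per round by \Cref{lemma:2k-increment}, with $k=12cd|A|^{1-1/d^2}\log|A|$. This $k$ is valid with high probability: $|W|/|A_{\mathrm{cur}}|\ge 1/(4c|A|^{1-1/d^2})$ gives $\Pr[X\cap W=\emptyset]\le |A|^{-3d}$ for every $X$ with $|X|\ge k$, and a union bound over at most $c^2|A|^{2d}$ pairs of $B$-class representatives finishes it. Summing $2k$ over at most $\log|A|$ rounds, plus the base order, yields $\mathcal{O}(|A|^{1-1/d^2}\log^2|A|)$.
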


The \(\mathcal{O}\)-notation hides factors depending on \(c\) and \(d\).
Although for \(d \ge 2\) the exponent \(1-1/d^2\) is worse than the optimal \(1-1/d\) exponent,
our algorithm still has the advantage of being fast and utilizing straightforward and well-known ideas.

\paragraph*{Applications in structural graph theory}
For a graph $G$, the \emph{neighborhood set system} of $G$ is defined as 
$(V(G), \{ N(v) \mid v \in V(G) \})$.
The \emph{neighborhood complexity} $\pi_G$ of $G$ is the shatter function of this set system.

We further define for a graph \(G\) the \emph{crossing number} of a total order on $V(G)$ 
as its crossing number with respect to the neighborhood set system.
With this notation (and using the fact that the neighborhood set system equals its own dual)
\Cref{thm:main} yields:

\begin{theorem}\label{thm:main_graph}
Let $G$ be a graph with \(n\) vertices, \(m\) edges, and neighborhood complexity $\pi_G(k) \le c \cdot k$ for some \(c \ge 1\).
There is a randomized algorithm that
takes \(G\) and \(c\) as input and
computes
in time \(\mathcal{O}\bigl((n + m) \log n\bigr)\),
with probability \(2/3\),
a total order on $V(G)$ with crossing number at most $12c^2 \log^2 n$.
\end{theorem}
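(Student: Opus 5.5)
We derive \Cref{thm:main_graph} from \Cref{thm:main} by applying it to the neighborhood set system $\mathcal{S}_G := (V(G), \{N(v) \mid v \in V(G)\})$. This requires three checks: that $\mathcal{S}_G$ meets the hypotheses of \Cref{thm:main} with the same constant $c$, that $\|\mathcal{S}_G\|$ is $\mathcal{O}(n+m)$ and the system can be built within that time, and that the guarantee transfers back to the graph.

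\textbf{Hypotheses.} By definition the primal shatter function of $\mathcal{S}_G$ is $\pi_G$, so $\pi_{\mathcal{F}}(k) \le c\cdot k$. For the dual, index the sets by vertices, writing $X_v = N(v)$. The dual set of a ground element $u$ is $\{X_v \mid u \in N(v)\} = \{X_v \mid v \in N(u)\}$, because adjacency is symmetric. Identifying $X_v$ with $v$ therefore maps the dual system onto $\mathcal{S}_G$ itself, and the dual shatter function equals $\pi_G$ as well.

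One subtlety: two distinct vertices may have the same neighborhood (twins). Then $\mathcal{F}$, viewed as a set of sets, has fewer than $n$ elements, and the dual should be read with $\mathcal{F}$ as a multiset indexed by $V(G)$. Duplicated sets never increase the number of distinct traces, and for the dual shatter function twins can be treated as one element. So the bound $\pi^*(k) \le c\cdot k$ still holds, and I would state this explicitly. Alternatively, one could merge twins first (e.g.\ by partition refinement in linear time), but this is not needed.

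\textbf{Size and runtime.} With adjacency lists as input, the canonical representation of $\mathcal{S}_G$ is exactly the list of neighborhoods, so it can be written down in $\mathcal{O}(n+m)$ time. Its size is $\|\mathcal{S}_G\| = n + \sum_v |N(v)| = n + 2m$.

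Hence \Cref{thm:main} runs in time $\mathcal{O}\bigl((n+2m)\log(n+2m)\bigr)$. Since $m \le n^2$, we have $\log(n+2m) \le \log(3n^2) = \mathcal{O}(\log n)$, so the total runtime is $\mathcal{O}\bigl((n+m)\log n\bigr)$.

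\textbf{Transferring the guarantee.} The crossing number of an order on $V(G)$ for $G$ is defined as its crossing number with respect to $\mathcal{S}_G$. The order returned by \Cref{thm:main} therefore has, with probability $2/3$, crossing number at most $12c^2\log^2|U| = 12c^2\log^2 n$.

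The only genuinely non-routine point is the self-duality argument in the presence of twins. Everything else is bookkeeping.
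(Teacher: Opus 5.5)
Your proposal is correct and matches the paper, which obtains this theorem in one line by applying \Cref{thm:main} to the neighborhood set system and noting that this system is its own dual. Your extra care with twins (indexing $\mathcal{F}$ by $V(G)$ changes neither shatter function and keeps $\|S\| = n+2m$) and your conversion $\log(n+2m) = \mathcal{O}(\log n)$ are sound details that the paper leaves implicit.
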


This improves upon the previously best runtime of \(\mathcal{O}(n^3)\) for the linear case implied by Har-Peled's result~\cite{HP09}.

For a graph class $\mathcal{C}$, define its neighborhood complexity as 
$\pi_{\mathcal{C}}(k) = \max_{G \in \mathcal{C}} \pi_G(k)$.
Many natural graph classes have linear neighborhood complexity, 
including planar graphs, graphs of bounded genus, 
graphs excluding a fixed minor or topological minor,
and more generally all graph classes of bounded expansion~\cite{RVS19},
as well as graph classes of bounded twin-width~\cite{BFLP24}.
For such classes, \Cref{thm:main_graph} computes Welzl orders
with $\mathcal{O}_{\mathcal{C}}(\log^2 n)$ crossings.
Several important graph classes have \emph{almost-linear} neighborhood complexity,
meaning $\pi_{\mathcal{C}}(k) = \mathcal{O}_\varepsilon(k^{1+\varepsilon})$ for every $\varepsilon > 0$.
This includes nowhere dense graph classes~\cite{EGKKPRS17},
monadically stable graph classes~\cite{ms},
and conjecturally all monadically dependent graph classes~\cite{ms}.
For such classes, \Cref{thm:main_graph} yields Welzl orders
with $\mathcal{O}_{\mathcal{C},\varepsilon}(n^\varepsilon)$ crossings for every $\varepsilon > 0$.

Let us discuss one application of \Cref{thm:main_graph}.

A \emph{compact neighborhood cover} of a graph \(G\) is a family
\(\mathcal{K}\) of subsets of vertices of \(G\)
such that for every vertex \(v\) of \(G\) there exists $C \in \mathcal K$ such that $N[v] \subseteq C$,
and, moreover, each set \(C \in \mathcal{K}\) has diameter at most 4.
The \emph{overlap} of \(\mathcal K\) is 
\(\max_{v \in V(G)} |\{ C \in \mathcal K \mid v \in C \}|\).
Compact neighborhood covers with small overlap play a key role in first-order model checking algorithms~\cite{nd,snd,ms}.
The following is a direct consequence of \Cref{thm:main_graph} and the algorithm given in the proof of~\cite[Lemma 17]{ms}:

\begin{theorem}\label{thm:neighborhoodCover}
    Let $G$ be a graph with \(n\) vertices, \(m\) edges, and neighborhood complexity $\pi_G(k) \le c \cdot k$ for some \(c \ge 1\).
    There is a randomized algorithm that
    takes \(G\) and \(c\) as input and
    computes
    in time $\mathcal{O}\bigl((n + m) \log n\bigr)$,
    with probability \(2/3\),
    a compact neighborhood cover with overlap at most $1+12c^2 \cdot \log^2 n$.
    
\end{theorem}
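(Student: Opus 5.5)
We will apply \Cref{thm:main_graph} and then convert the resulting order into a cover, essentially as in the proof of~\cite[Lemma 17]{ms}. First, in time $\mathcal{O}((n+m)\log n)$, we run the algorithm of \Cref{thm:main_graph} on $G$ and $c$. With probability $2/3$ this gives a total order $v_1 < v_2 < \dots < v_n$ on $V(G)$ with crossing number $k \le 12c^2\log^2 n$ with respect to the neighborhood set system. The rest of the construction is deterministic, so the success probability carries over unchanged.

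Second, we build the cover by scanning the order once. We cut the sequence into maximal consecutive intervals $I_1,\dots,I_r$. An interval $I_j=\{v_a,\dots,v_b\}$ is grown greedily, keeping it inside a common closed neighborhood. Concretely, we fix an anchor $w_j$, for instance the first vertex $v_a$, and keep extending while the next vertex lies in $N[w_j]$ or within distance $2$ of $w_j$. For each interval we output
\[
C_j := \bigcup_{v\in I_j} N[v].
\]
If every vertex of $I_j$ is within distance $2$ of a common vertex $w_j$, then $C_j$ lies in the radius-$3$ ball around $w_j$. This is not yet diameter at most $4$, so the cutting rule must be tightened. The rule I would use is: $I_j$ is a maximal interval such that all of its vertices have a common neighbour, i.e.\ $\bigcap_{v\in I_j} N(v)\neq\emptyset$. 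Then every vertex of $C_j$ is within distance $2$ of that common neighbour $w_j$, so $C_j$ has diameter at most $4$. Every $v$ has $N[v]\subseteq C_j$ for the interval containing it, so the covering property is immediate. Maintaining the intersection of neighbourhoods during the scan is a bottleneck. If it is too slow, I would fall back on the exact rule of~\cite[Lemma 17]{ms}, which I recall uses the sets $N[v]$ together with the crossings of the order directly.

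Third, we bound the overlap. A vertex $u$ lies in $C_j$ iff $N(u)\cap I_j\neq\emptyset$ or $u\in I_j$. The intervals where the second case occurs contribute $1$. The set $N(u)$ is a member of the set system, so the order has at most $k$ positions where membership in $N(u)$ switches. Hence $N(u)$ splits into at most $k/2+1$ maximal runs of consecutive vertices. The key claim is that each run meets only one interval, or at most a bounded number of them, and this is where the greedy cutting rule matters. All vertices of a run of $N(u)$ share the common neighbour $u$, so by maximality a cut can fall inside a run only if the interval began before the run. The charging must be made to give the exact bound $1+k$. I would charge each interval $I_j$ meeting $N(u)$ to the crossing of $N(u)$ at its left boundary, or to the first run. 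The counting then gives at most $1+k \le 1+12c^2\log^2 n$ intervals.

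The main obstacle is exactly this charging together with the efficient scan. The greedy maximal-common-neighbour rule must simultaneously give diameter $4$, overlap $1+k$ rather than something like $2k$, and total time $\mathcal{O}(n+m)$. If the direct charging fails, I would replicate the lemma of~\cite{ms} verbatim and only verify that its running time is linear in $n+m$ given the order. Either way, the total time is dominated by the $\mathcal{O}((n+m)\log n)$ call to \Cref{thm:main_graph}.
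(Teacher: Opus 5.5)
Your outer structure matches the paper's proof, which is one line: run \Cref{thm:main_graph} and pass the order to the order-to-cover conversion from the proof of \cite[Lemma 17]{ms}. The gap is in the self-contained conversion you actually argue for.

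Your greedy rule takes maximal intervals whose vertices share a common neighbour and sets $C_j=\bigcup_{v\in I_j}N[v]$, with one-element intervals allowed for isolated vertices. This does give diameter at most $4$. The scan is also not a bottleneck. Maintain $\mathrm{cnt}[w]$, the number of vertices of the current interval adjacent to $w$. When appending $v$, only $w\in N(v)$ can be common neighbours, so you only test those; on a cut you undo the increments. This costs $\mathcal{O}(n+m)$ in total.

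However, the overlap bound $1+k$ is false for this rule. Your charging breaks exactly when a cut falls inside a run of $N(u)$. The later interval then has no crossing at its left boundary. The crossing through which the earlier interval entered the run may already be charged to that earlier interval. Concretely, take the tree with edges $ua,ub_1,ub_2,wg,wb_1$ and the order $a<u<w<g<b_1<b_2$. Every neighbourhood has exactly two crossings, so $k=2$. The greedy intervals are $\{a\},\{u,w\},\{g,b_1\},\{b_2\}$, and $u$ lies in all four clusters.

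What your rule does guarantee is overlap at most $k+2$. An interval starting inside a run of $N(u)$ keeps $u$ as a common neighbour, so it extends to the end of the run. Hence each run meets at most two intervals, and only one if it starts at the first position. Counting crossings, at most $k+1$ intervals meet $N(u)$; the extra one comes from a split final run ending at the last position. The interval containing $u$ may add one more.

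So, as written, you do not reach $1+12c^2\log^2 n$. You have two options:
\begin{enumerate}
\item Use the actual construction and argument of \cite[Lemma 17]{ms}, which is what the paper relies on. You would then still need to check that it runs in $\mathcal{O}(n+m)$ given the order.
\item Keep your rule and recover the missing unit from slack inside the proof of \Cref{thm:main_ABE}. The computation in \Cref{theorem:iteration_bound} really bounds the number of iterations by roughly $\log n-\log(10c^2)$ rather than $\log n$. The black-box statement of \Cref{thm:main_graph} does not give you this slack.
\end{enumerate}
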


Compact neighborhood covers have been a central ingredient in model checking algorithms for first-order logic on well-structured graph classes~\cite{nd,snd,ms}.
For nowhere dense graph classes~\cite{nd}, such covers were computed in near-linear time using generalized coloring numbers.
For the more general structurally nowhere dense graph classes~\cite{snd}, this approach was infeasible, and an LP formulation was used to compute compact neighborhood covers in time \(\mathcal{O}(n^{9.8})\).
Then for monadically stable classes~\cite{ms}, the connection between compact neighborhood covers and Welzl orders was discovered,
and Welzl's original algorithm was used to compute neighborhood covers in time roughly \(\mathcal{O}(n^{4})\),
leading to a total runtime of $\mathcal{O}_{\mathcal{C},\epsilon,\varphi}(n^{5+\varepsilon})$ for model checking (subscripts in the \(\mathcal{O}\)-notation hide constant factors depending on the subscripts).
While not mentioned in the paper, Har-Peled's \(\mathcal{O}(n^{3})\)-algorithm~\cite{HP09} can be used to further reduce to $\mathcal{O}_{\mathcal{C},\epsilon,\varphi}(n^{4+\varepsilon})$.
By further substituting our near-linear time \Cref{thm:neighborhoodCover} we obtain:

\begin{corollary}[{\cite{ms}}]\label{thm:monstable}
Let $\mathcal{C}$ be a monadically stable graph class and $\varepsilon > 0$.
There is an algorithm that, given an $n$-vertex graph $G \in \mathcal{C}$
and a first-order sentence $\varphi$,
decides whether $\varphi$ holds in $G$ in time $\mathcal{O}_{\mathcal{C},\varepsilon,\varphi}(n^{3 +\varepsilon})$.
\end{corollary}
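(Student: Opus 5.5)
The corollary follows from the model checking framework of~\cite{ms}; we reuse that algorithm unchanged and only swap out one subroutine. Concretely, we take the model checking algorithm for monadically stable classes from~\cite{ms} as a black box, which runs in time $\mathcal{O}_{\mathcal{C},\varepsilon,\varphi}(n^{5+\varepsilon})$, and split its runtime into two parts. The first part is the time spent computing compact neighborhood covers, which in~\cite{ms} uses Welzl's original algorithm and costs roughly $\mathcal{O}(n^4)$ per call. The second part is everything else. Checking the main recursion of~\cite{ms}, I expect everything outside the cover computations to cost $\mathcal{O}_{\mathcal{C},\varepsilon,\varphi}(n^{3+\varepsilon})$. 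The argument has the following steps.

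\emph{Near-linear neighborhood complexity.} By~\cite{ms}, every graph $H$ arising in the recursion lies in a monadically stable class derived from $\mathcal{C}$ (a bounded-depth transduction of $\mathcal{C}$). Such a class again has $\pi_H(k)\le c_\varepsilon k^{1+\varepsilon}$ for every $\varepsilon>0$.

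\emph{Replacing the cover computation.} Since \Cref{thm:neighborhoodCover} is stated for linear complexity, we apply the near-linear extension of the same algorithm: the $d=1+\varepsilon$ analogue of \Cref{thm:main_graph}. This gives orders with $\mathcal{O}_\varepsilon(n^{\varepsilon})$ crossings, and hence covers with overlap $\mathcal{O}_\varepsilon(n^{\varepsilon})$, in time $\mathcal{O}((n+m)\log n)\le\mathcal{O}(n^2\log n)$. As an alternative route, one can note that the cover construction of~\cite[Lemma~17]{ms} only needs an order whose crossing number is subpolynomial, and \Cref{thm:main2} with parameters tuned to $\varepsilon$ provides this.

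\emph{Failure probability.} The algorithm is randomized, so we amplify its success. The cover property (each $N[v]$ is contained in some $C$) and the overlap bound can both be verified in time $\mathcal{O}(n^2)$. We therefore repeat the cover computation until verification succeeds. This yields a Las Vegas algorithm with expected runtime $\mathcal{O}(n^2\log n)$ per cover. If a deterministic bound is required, we can instead fall back after $\mathcal{O}(\log n)$ failed attempts to the slow method, which changes the expected cost only negligibly.

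\emph{Recombining.} The overlap now carries an extra $n^{\varepsilon}$ factor compared with~\cite{ms}. That paper already absorbs an $n^{\varepsilon}$ overlap into its $n^{\varepsilon}$ slack, because its recursion charges overlap multiplicatively only a bounded number of times per quantifier level. So after rescaling $\varepsilon$, the total runtime is the $\mathcal{O}(n^{3+\varepsilon})$ non-cover cost plus the cover calls. There are polynomially many cover calls, bounded by $n^{1+\varepsilon}$ summed over the recursion, each costing $\tilde{\mathcal{O}}(n^2)$, for a total of $\mathcal{O}(n^{3+\varepsilon})$.

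The main obstacle is this last accounting. We must verify, from the structure of~\cite{ms}, that the non-cover work is genuinely $n^{3+\varepsilon}$ and not hidden inside the old $n^5$ bound, and that the total size of all graphs on which covers are requested is $n^{1+\varepsilon}$. If some step of~\cite{ms} other than the cover computation (for example, computing flips or the transduction) costs more than $n^{3+\varepsilon}$, the corollary would need a corresponding improvement there as well. I would first check the recurrence in~\cite{ms}: the runtime should satisfy $T(n)\le \sum_i T(n_i)+\text{cover}(n)+\mathcal{O}(n^{3})$ with $\sum_i n_i\le n^{1+\varepsilon}$ at bounded recursion depth.
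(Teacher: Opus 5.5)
Your overall strategy matches the paper's: keep the algorithm of \cite{ms} and swap in the faster cover computation. The step that is supposed to produce covers under almost-linear neighborhood complexity fails as written, though. \Cref{thm:main2} requires $d\ge 2$ (its analysis uses, e.g., $(d+1)/d^2\le 1$) and gives crossing number $\mathcal{O}(n^{1-1/d^2}\log^2 n)$, which is at least $n^{3/4}$. No tuning makes this subpolynomial, and a ``$d=1+\varepsilon$ analogue'' is not proved anywhere. The observation you are missing is simpler. For an $n$-vertex graph with $\pi_G(k)\le C_\varepsilon k^{1+\varepsilon}$ we have $\pi_G(k)\le C_\varepsilon n^{\varepsilon}\cdot k$ for all $k$ (for $k>n$ use $\pi_G(k)=\pi_G(n)$). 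So \Cref{thm:neighborhoodCover} applies with $c=C_\varepsilon n^{\varepsilon}$. Its running time does not depend on $c$, and the overlap becomes $\mathcal{O}(n^{2\varepsilon}\log^2 n)$, which is the same order \cite{ms} already works with after rescaling $\varepsilon$, not an extra factor.

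Second, you treat the covers that \cite{ms} needs as compact (radius-one) neighborhood covers. But \cite[Theorem~14]{ms} is parametrized by a radius $r$, while \Cref{thm:neighborhoodCover} only guarantees $N[v]\subseteq C$. The passage from radius $1$ to radius $r$ is what the paper takes from the beginning of \cite[Section~3.2]{ms}. The natural way to do it is to run the cover algorithm on a graph encoding distance-$r$ adjacency. That graph again comes from a monadically stable class, so it has almost-linear neighborhood complexity, but it can have $\Theta(n^2)$ edges. This is consistent with the $\mathcal{O}(n^2)$ per-call bound the paper states, rather than that bound coming from $m\le n^2$ for $G$ itself.

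Third, your final accounting is explicitly conditional: you guess a recurrence for \cite{ms} and say it still has to be checked, so as written the $n^{3+\varepsilon}$ bound is not established. The paper does not re-derive such a recurrence. It upgrades \cite[Theorem~14]{ms} as a self-contained statement, from $\mathcal{O}_{\mathcal{C},r,\varepsilon}(n^{4+\varepsilon})$ to $\mathcal{O}(n^2)$. It then reruns the analysis of \cite[Appendix~B]{ms} with this substitute, so the remaining work is bounded by that existing analysis rather than by an assumed bound on the ``non-cover work''.

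Finally, your randomization step is more complicated than needed and partly wrong. You never need to verify covers: \Cref{algorithm:welzl-orders} either returns \emph{false} or an order with the guaranteed crossing number, so failures are self-detected and you simply rerun. Falling back to the slow method after $\mathcal{O}(\log n)$ failures gives only an expected $n^{3+\varepsilon}$ bound, not a worst-case one. For a worst-case bound, repeat each call $\mathcal{O}(\log n)$ times and take a union bound over the polynomially many calls, accepting a small error probability.
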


The proof closely follows the arguments of~\cite{ms}: one first improves the runtime of their Theorem 14 from \(\mathcal{O}_{\mathcal{C},r,\epsilon}(n^{4+\epsilon})\) to \(\mathcal{O}(n^2)\) using our \Cref{thm:neighborhoodCover} and the ideas outlined in the beginning of~\cite[Section 3.2]{ms}.
Then one follows~\cite[Appendix B]{ms}, but substitutes their Theorem 14 with our improvement.

\section{Preliminaries}\label{sec:prelims}

Define the \emph{linearity} of a set system \((U,\mathcal F)\) as the smallest \(c \ge 1\) such that
\(\pi_\mathcal{F}(k),\pi^*_\mathcal{F}(k) \le c \cdot k\) for all \(k\).
We write $\log$ for base-\(2\) logarithms and $\ln$ for natural logarithms.
For convenience, we represent set systems \((U,\mathcal{F})\) as bipartite graphs \((A,B,E)\),
with \(A=U\), \(B=\mathcal{F}\) and \(E = \{ uX \mid u \in X, X \in \mathcal{F} \}\).
For a bipartite graph \(G\) and \(A' \subseteq A\), \(B' \subseteq B\) we denote by \(G[A',B']\) the induced subgraph on \(A',B'\).

\paragraph*{Twin partitions}

Consider a bipartite graph \(G=(A,B,E)\).
Two vertices \(v,v' \in A\) are \emph{twins} if \(N(v)=N(v')\).
Note that the twin relation is an equivalence relation. We call the (unique) partition of \(A\) into twin equivalence classes the \emph{twin partition} of \(A\).
The definition of twin classes provides a natural way to express the notion of neighborhood complexity.
\[
\pi_G(n) := \max_{A' \subseteq A, |A'| \leq n} |\mathcal{B}_{A'}| \textnormal{\quad\quad where \(\mathcal{B}_{A'}\) is the twin partition of \(B\) in \(G[A',B]\)}.
\]
\[
\pi^*_G(n) := \max_{B' \subseteq B, |B'| \leq n} |\mathcal{A}_{B'}| \textnormal{\quad\quad where \(\mathcal{A}_{B'}\) is the twin partition of \(A\) in \(G[A,B']\)}.
\]
Use \(X \symdiff Y\) to denote the symmetric difference between two sets \(X\) and \(Y\).
We say \(v,v' \in A\) are \emph{\(k\)-near twins}, for \(k \in \mathbb{N}\), if \(|N(v) \symdiff N(v')| \le k\).
For a bipartite graph $G = (A, B, E)$, a partition \(\mathcal{P}=\{P_1, \dots, P_\ell\}\) of the vertex set $A$,
and a set \(P' \subseteq A\) of representatives with \(P'=\{v_1,\dots,v_\ell\}\),
we say that \(\mathcal{P}\) is a \emph{\(k\)-near twin partition with representatives \(P'\)} if for all \(i\) we have \(v_i \in P_i\) and every \(v \in P_i\) is a \(k\)-near twin of \(v_i\).
Note that \(0\)-near twins are twins, and that a \(0\)-near twin partition is the twin partition.

\paragraph*{Relation between twin partitions and Welzl orders}

Before introducing the algorithm, we examine the relation between twin partitions and crossing numbers, as this connection is central to our approach.

The first observation, already made by Welzl~\cite{Wel88}, can be restated as follows: when computing an order on $A$ with low crossing number, we can temporarily ignore the twins in $A$ without increasing the crossing number.

\begin{lemma}[\cite{Wel88}]\label{lemma:suspend-twins}
Let $G = (A, B, E)$ be a set system.
For any order on $A$, duplicating a vertex (that is, inserting a new vertex \(a'\) with \(N(a)=N(a')\) immediately after an existing vertex \(a\) in the order) does not change its crossing number.
\end{lemma}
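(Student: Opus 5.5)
The plan is to compare, for each set $X \in B$, the crossing number of $X$ before and after the duplication, and to show the two counts are equal. Since the crossing number of the order is the maximum of these per-set counts over $B$, equality for every single $X$ gives the lemma. Fix an order $a_1 < \dots < a_n$ on $A$, a vertex $a = a_i$, and the new vertex $a'$ inserted directly after it. The new order is then $a_1, \dots, a_i, a', a_{i+1}, \dots, a_n$. Throughout, $N(a') = N(a)$ means that $a' \in X$ if and only if $a \in X$ for every $X \in B$.

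Next I list which adjacent pairs change. All adjacent pairs of the old order are still adjacent in the new order except $(a_i, a_{i+1})$, if it exists. That pair is replaced by the two pairs $(a_i, a')$ and $(a', a_{i+1})$. If $a_i$ was last, only $(a_i, a')$ is added.

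The pair $(a_i, a')$ never crosses $X$, because $a_i$ and $a'$ lie on the same side of $X$. The pair $(a', a_{i+1})$ crosses $X$ exactly when $(a_i, a_{i+1})$ did, since $a'$ and $a_i$ have the same membership in $X$. So for each $X$ the number of crossing adjacent pairs is unchanged.

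There is no real obstacle. The only point needing care is the boundary case where $a$ is the last element, and it is covered by the same observation about the pair $(a_i, a')$.
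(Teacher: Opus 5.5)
Your proof is correct and follows essentially the same argument as the paper: the old adjacent pair $(a_i, a_{i+1})$ is replaced by $(a_i, a')$, which never crosses, and $(a', a_{i+1})$, which crosses exactly when the old pair did. Your version is slightly more complete than the paper's, since you state this equivalence in both directions and you treat the case where $a$ is the last element, both of which the paper leaves implicit.
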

\begin{proof}
Consider an order $\prec$ on $A$ and a vertex $a$ with successor \(a''\).
We now insert a new vertex $a'$ with $N(a') = N(a)$ immediately after $a$, but before \(a''\).
Fix any \(b \in B\).
The new pair $(a, a')$ contributes no crossing since $a \in N(b) \iff a' \in N(b)$.
Moreover, if a new crossing \((a',a'')\) appears, then previously a crossing \((a,a'')\) existed, which now vanishes.
Hence, the crossing number is unchanged.
\end{proof}

A subroutine of our algorithm contracts the twin partition of $A$ into a set of representatives (while remembering for every removed vertex its corresponding representative).
\Cref{lemma:suspend-twins} ensures that all removed vertices can later be reinserted next to their representative in the final order without increasing the crossing number.

The second observation deals with \emph{near twins} in $B$. When computing an order on $A$, we can replace a $k$-near twin partition of $B$ by its representatives, increasing the crossing number by at most $2k$.

\begin{lemma}\label{lemma:2k-increment}
Let $G = (A, B, E)$ be a set system and let $\mathcal{B}$ be a $k$-near twin partition of $B$ with representatives $B'$.
Any order on \(A\) with crossing number at most \(m\) on \(G[A,B']\) has crossing number at most \(m+2k\) on \(G\).
\end{lemma}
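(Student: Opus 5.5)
The plan is a direct pointwise comparison: fix an order $\prec$ on $A$ whose crossing number on $G[A,B']$ is at most $m$, and bound the crossings of every $b \in B$ in $G$ separately. Let $b \in B$ lie in the part $B_i \in \mathcal{B}$ with representative $b_i \in B'$. By the definition of a $k$-near twin partition, $|N(b) \symdiff N(b_i)| \le k$. Since $b_i \in B'$, the number of adjacent pairs $(u,u')$ in $\prec$ crossing $N(b_i)$ is at most $m$. It remains to show that replacing $N(b_i)$ by $N(b)$ raises the number of crossing adjacent pairs by at most $2k$.

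The key observation concerns an adjacent pair $(u,u')$ that crosses $N(b)$ but not $N(b_i)$. For such a pair, membership of $u$ and $u'$ in the two sets cannot agree on both vertices. Indeed, if $u \in N(b) \iff u \in N(b_i)$ and likewise for $u'$, then $(u,u')$ would cross $N(b)$ exactly when it crosses $N(b_i)$. Hence at least one of $u,u'$ lies in $D := N(b) \symdiff N(b_i)$.

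Each vertex of $A$ belongs to at most two adjacent pairs of the order: the pair with its predecessor and the pair with its successor. So the number of adjacent pairs containing some vertex of $D$ is at most $2|D| \le 2k$. The number of crossings of $N(b)$ is therefore at most the number of crossings of $N(b_i)$ plus $2k$, which gives at most $m+2k$.

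Taking the maximum over all $b \in B$ yields crossing number at most $m+2k$ on $G$. The argument is elementary and has no real obstacle. The only point needing care is the observation that a newly created crossing must touch a vertex of the symmetric difference. A more refined count, by toggling the vertices of $D$ one at a time, shows that each toggle changes the crossing count by at most $2$, and this gives the same bound.
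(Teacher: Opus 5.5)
Your proposal is correct and follows essentially the same argument as the paper: compare each $b$ with its representative $b_i$, note that any adjacent pair whose crossing status differs must contain a vertex of $N(b) \symdiff N(b_i)$, and use that each vertex lies in at most two adjacent pairs to bound the difference by $2k$. The closing remark about toggling vertices one at a time is an optional alternative count and is not needed.
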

\begin{proof}
Let $\prec$ be an order on $A$ with crossing number at most $m$ on $G[A,B']$.
For any $b \in B$, let $b_{\mathrm{rep}} \in B'$ be its representative in $\mathcal{B}$.
By definition, $|N(b) \symdiff N(b_{\mathrm{rep}})| \le k$.

The crossing number of $\prec$ with respect to $b$ counts adjacent pairs $(a, a')$ where exactly one of $a, a'$ is in $N(b)$.
An adjacent pair $(a, a')$ contributes differently to the crossing numbers for $b$ and $b_{\mathrm{rep}}$
only if at least one of $a, a'$ lies in $N(b) \symdiff N(b_{\mathrm{rep}})$.
Since each vertex participates in at most two adjacent pairs (with its predecessor and successor),
the number of pairs that contribute differently is at most $2k$.
Thus the crossing number for $b$ differs from that for $b_{\mathrm{rep}}$ by at most $2k$.

Since $b_{\mathrm{rep}} \in B'$ and the order has crossing number at most $m$ on $G[A,B']$,
the crossing number for $b$ is at most $m + 2k$.
This holds for all $b \in B$, proving the claim.
\end{proof}

\section{Introducing the Algorithm}\label{section:introduction_algorithm_graph_notation}

We now restate our main result using the graph-theoretic notation introduced in \Cref{sec:prelims}.

\setcounter{temptheorem}{\value{theorem}}
\setcounter{theorem}{\getrefnumber{thm:main}}\addtocounter{theorem}{-1}
\begin{theorem}[restated]\label{thm:main_ABE}
Let $G=(A,B,E)$ be a set system of linearity~\(c \ge 1\).
There is a randomized algorithm that takes \(G\) and \(c\) as input
and computes in time \(\mathcal{O}\bigl((|A| + |E|)\log |A|\bigr)\),
with probability \(2/3\),
a total order on \(A\) with crossing number at most $12c^2 \log^2 |A|$.
\end{theorem}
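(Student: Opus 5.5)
We will build the order recursively by halving $A$. In each round we sample a random subset $B_0\subseteq B$ of size roughly $\Theta(c\log|A|)$ times the current number of distinct vertex types. We contract the twin partition of $A$ with respect to $B_0$, recurse on the representatives, and afterwards reinsert the removed vertices next to their representatives. \Cref{lemma:suspend-twins} guarantees that this reinsertion costs no crossings.

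More concretely, we follow the standard Welzl-style iterative scheme in reverse. Let $A_0=A$. At level $i$ we have a current set $A_i$ and sample $B_i\subseteq B$. By the dual shatter bound, the twin partition of $A_i$ in $G[A_i,B_i]$ has at most $c|B_i|$ classes. We choose $|B_i|\approx |A_i|/(2c)$ so that the number of classes is at most $|A_i|/2$. Taking one representative per class gives $A_{i+1}$, so there are $\mathcal{O}(\log|A|)$ levels. The final order is obtained by unwinding: order the last (constant-size) set arbitrarily, then at each level insert every vertex of $A_i$ directly after its representative in $A_{i+1}$.

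The crossing number for a fixed $b\in B$ comes from pairs merged as twins with respect to $B_i$ that nevertheless differ on $b$. The key probabilistic step is an $\varepsilon$-net or $\varepsilon$-approximation argument on the dual system. The primal shatter bound $\pi_{\mathcal F}(k)\le ck$ gives a bounded number of distinct traces of $B$ on the relevant sets. Hence, with high probability, every $b$ splits only $\mathcal{O}(c\log|A|)$ of the merged classes at each level. Here $b$ splits a class if it separates two members, and each such class containing a $b$-boundary adds at most two crossings when the class is inserted contiguously. To make this precise, I would show that a sample of density $p$ makes any $b$ which separates more than $\mathcal{O}(c\log|A|/p)$ vertices of a class unlikely to be absent from $B_i$. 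An alternative is to replace $B$ by a near-twin partition with representatives and invoke \Cref{lemma:2k-increment}. Summing $\mathcal{O}(c\log|A|)$ crossings over $\mathcal{O}(\log|A|)$ levels, with constants tracked and the number of levels bounded by $\log|A|$, should give the bound $12c^2\log^2|A|$. A union bound over the at most $c|A|$ distinct neighborhoods of $B$ restricted to $A$ then yields success probability $2/3$.

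For the runtime, the twin partition of $A_i$ with respect to $B_i$ is computed by partition refinement in time $\mathcal{O}(|A_i|+\sum_{b\in B_i}|N(b)|)$. Since each $b$ is sampled with the appropriate probability, the expected total work over the levels is $\mathcal{O}(|E|\log|A|)$. Reinsertion is linear per level.

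The main obstacle is the probabilistic bound on splits per level. A sampled $B_i$ of size only about $|A_i|/c$ must be shown to approximate all of $B$ well enough that every merged twin class is crossed by each $b$ only polylogarithmically many times, in aggregate over the classes. I would first try a weight-doubling-free argument: order classes by a fixed linear structure, and charge each crossing of $b$ to an element of $N(b)\symdiff N(b')$ for the nearest sampled $b'$. The net size is then bounded using the primal shatter function.
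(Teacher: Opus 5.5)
There is a genuine gap, exactly at the step you flag as the main obstacle: you sample on the wrong side. Drawing $B_i$ at random from $B$ is an $\varepsilon$-net argument for the dual system. It shows that, with high probability, any two vertices $a,a'$ merged at level $i$ satisfy $|N(a)\symdiff N(a')|=\mathcal{O}(|B|\log|A|/|B_i|)$, i.e.\ every merged pair is separated by few sets. Summed over $b$, this controls only the \emph{average} crossing number. It says nothing about a single $b$ that separates many merged pairs, and the maximum is what the theorem needs. Your claim that such a $b$ is ``unlikely to be absent from $B_i$'' cannot hold: a fixed $b$ lies in $B_i$ with probability $p$, however much it separates.

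Here is a counterexample. Take $A=\{1,\dots,n\}$, and let $B$ consist of all prefixes $\{1,\dots,j\}$ together with the set $b^\ast$ of even numbers; this system has linearity at most $3$. With $|B_0|\approx n/6$, the set $b^\ast$ is missed at the first level with probability about $5/6$. The first-level classes are then the intervals between consecutive sampled prefix endpoints, and $\Omega(n)$ of them contain two consecutive integers. Each such class is a contiguous block of the final order, so $b^\ast$ has $\Omega(n)$ crossings.

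Your fallback of invoking \Cref{lemma:2k-increment} needs every $b$ to have a near twin among the representatives. A random subset of $B$ does not provide that (nothing close to $b^\ast$ survives), and you give no other way to produce one. Separately, a class split by $b$ can contribute up to its size in crossings, not two. The correct accounting charges two crossings per vertex that disagrees with its representative on $b$.

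The missing idea is to sample from the ground set. Draw $W\subseteq A_{\mathrm{cur}}$ of size $\lceil|A_{\mathrm{cur}}|/(2c^2)\rceil$, and take the representatives $B'$ of $B_{\mathrm{cur}}$ to be its twin classes with respect to $W$. Then every $b$ has a representative by construction, and the primal bound gives $|B'|\le c|W|$. An $\varepsilon$-net estimate for the sets $N(b)\symdiff N(b')\subseteq A_{\mathrm{cur}}$, with a union bound over the at most $c^2|A|^2$ pairs of distinct neighborhoods, shows each class is a $(6c^2\log|A|)$-near twin class; \Cref{lemma:runtime-check-partition} verifies this in linear time.

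Only then contract the \emph{exact} twin classes of $A_{\mathrm{cur}}$ over $B'$. The dual bound gives $|A'|\le c|B'|\le c^2|W|\approx|A_{\mathrm{cur}}|/2$, and one recurses on $(A',B')$, so $B$ shrinks as well. When unwinding one level, \Cref{lemma:suspend-twins} makes reinsertion free with respect to $B'$, and \Cref{lemma:2k-increment} costs $12c^2\log|A|$ when passing back to $B_{\mathrm{cur}}$. Over at most $\log|A|$ levels this yields $12c^2\log^2|A|$. Note that this uses both shatter bounds for the halving, whereas your scheme uses only the dual one.
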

\setcounter{theorem}{\value{temptheorem}}

We will show that \Cref{algorithm:welzl-orders} has these properties. See \Cref{figure:algorithm_illustration} for an illustration of one iteration of \Cref{algorithm:welzl-orders}.

\begin{figure}[htbp]
\captionsetup{type=algorithm}
\centering
\fbox{%
\parbox{0.95\linewidth}{
\textbf{Input.} \(c \ge 1\) and a set system $G=(A,B,E)$ of linearity~$c$.

\textbf{Output.} An order $\prec$ on $A$, or \emph{false}.

\begin{steplist}
  \item \textbf{Initialization.}
    Set $A_{\mathrm{cur}}\!\gets\!A$, $B_{\mathrm{cur}}\!\gets\!B$, and $L\!\gets\!\emptyset$ (stack of stored pairs).

  \item \textbf{Main loop.} 
    While $|A_{\mathrm{cur}}|>12c^2 \cdot \log |A|$:
    \begin{steplist}
	        \item \textbf{Sampling.} 
	            Sample $W \subseteq A_{\mathrm{cur}}$ of size 
	            $\left\lceil |A_{\mathrm{cur}}|/(2c^2) \right\rceil$ uniformly at random.
        \item \textbf{Partitioning.}
            Let \(\mathcal B\) be the twin partition of \(B_{\mathrm{cur}}\) in \(G[W,B_{\mathrm{cur}}]\) with representatives \(B'\).
            Let \(\mathcal A\) be the twin partition of \(A_{\mathrm{cur}}\) in \(G[A_{\mathrm{cur}},B']\) with representatives \(A'\).

        \item \textbf{Checking guarantees.}
            Check if \(\mathcal B\) is a \((6c^2 \log |A|)\)-near twin partition of \(B_{\mathrm{cur}}\) in \(G[A_{\mathrm{cur}},B_{\mathrm{cur}}]\) with representatives \(B'\).

            If not, return \emph{false}.
        
      \item \textbf{Bookkeeping.}  
          For each vertex \(v \in A_{\mathrm{cur}} \setminus A'\) with representative \(v_{\mathrm{rep}} \in A'\),
          push the tuple \((v,v_{\mathrm{rep}})\) onto the stack \(L\).
       \item $(A_{\mathrm{cur}},B_{\mathrm{cur}})\gets(A',B')$.
    \end{steplist}

  \item \textbf{Base construction.}\label{algorithm-step:base_construction}
    Choose an arbitrary order $\prec$ on the remaining vertex set $A_{\mathrm{cur}}$.

  \item \textbf{Reconstruction phase.}\label{algorithm-step:reconstruction}
    Pop pairs $(v,v_{\mathrm{rep}})$ from $L$ in reverse insertion order and insert each $v$ directly after $v_{\mathrm{rep}}$ in $\prec$.

  \item \textbf{Output.} Return the constructed order $\prec$.
\end{steplist}
}}
\caption{The main algorithm mentioned in \Cref{thm:main_ABE}.}
\label{algorithm:welzl-orders}
\end{figure}

\begin{figure}[t]

\captionsetup[subfigure]{labelformat=empty}
  \centering
  \begin{subfigure}[t]{0.328\linewidth}
    \centering
    \includegraphics[page=1, width=0.8\linewidth]{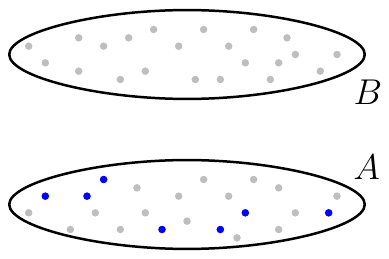}
    \caption{Choose a sample $W \subseteq A$,}
  \end{subfigure}\hfill
  \begin{subfigure}[t]{0.328\linewidth}
    \centering
    \includegraphics[page=2, width=0.8\linewidth]{algorithm-images.pdf}
    \caption{partition $B$ with respect to $W$,}
  \end{subfigure}\hfill
  \begin{subfigure}[t]{0.328\linewidth}
    \centering
    \includegraphics[page=3, width=0.8\linewidth]{algorithm-images.pdf}
    \caption{only keep representatives $B'$,}
  \end{subfigure}\hfill
  \par\medskip
  \begin{subfigure}[t]{0.328\linewidth}
    \centering
    \includegraphics[page=4, width=0.8\linewidth]{algorithm-images.pdf}
    \caption{partition $A$ with respect to $B'$,}
  \end{subfigure}
  \begin{subfigure}[t]{0.328\linewidth}
    \centering
    \includegraphics[page=5, width=0.8\linewidth]{algorithm-images.pdf}
    \caption{store tuples from $A$ to the stack $L$,}
  \end{subfigure}
  \begin{subfigure}[t]{0.328\linewidth}
    \centering
    \includegraphics[page=6, width=0.8\linewidth]{algorithm-images.pdf}
    \caption{and proceed with $G[A', B']$.}
  \end{subfigure}

  \caption{Illustration of one iteration of the algorithm. With $A = A_{\mathrm{cur}}$ being the lower and $B = B_{\mathrm{cur}}$ the upper color class}
  
  \label{figure:algorithm_illustration}
\end{figure}

The upcoming sections show that \Cref{algorithm:welzl-orders} computes orders with small crossing number, has success probability \(2/3\), and runs in near-linear time.
With these in place, the proof of \Cref{thm:main_ABE} follows.

\begin{proof}[Proof of \Cref{thm:main_ABE}]
By \Cref{theorem:resulting-crossing-number} (proved in \Cref{sec:crossing-number}), \Cref{algorithm:welzl-orders} either returns \emph{false} or computes an order on $A$ with crossing number at most $12c^2 \log^2 |A|$.
By \Cref{theorem:runtime-bound} (proved in \Cref{runtime}), the algorithm terminates in time $\mathcal{O}\bigl((|A| + |E|)\log |A|\bigr)$.
By \Cref{theorem:probability_of_error} (proved in \Cref{section:probability_of_bad_run}), the probability of returning \emph{false} is at most $1/3$.
\end{proof}

\subsection{Bounding the number of iterations}

\begin{theorem}\label{theorem:iteration_bound}
    Given a set system $G = (A, B, E)$ of linearity~$c \ge 1$,
    the main loop of \Cref{algorithm:welzl-orders} terminates after at most $\log |A|-1$ iterations.
\end{theorem}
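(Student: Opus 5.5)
We show that each iteration of the main loop at least halves $|A_{\mathrm{cur}}|$, and then count how many halvings fit between $|A|$ and the threshold.

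\emph{Step 1: one iteration halves $A_{\mathrm{cur}}$.} Fix an iteration with current sets $A_{\mathrm{cur}}, B_{\mathrm{cur}}$ and sample $W$ of size $\lceil |A_{\mathrm{cur}}|/(2c^2)\rceil$. The twin partition $\mathcal B$ of $B_{\mathrm{cur}}$ in $G[W,B_{\mathrm{cur}}]$ has at most $\pi_G(|W|)\le c|W|$ classes. This holds because $G[A_{\mathrm{cur}},B_{\mathrm{cur}}]$ is an induced subgraph of $G$, and induced subgraphs cannot have larger shatter functions. Hence $|B'|\le c|W|$.

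Next, the twin partition $\mathcal A$ of $A_{\mathrm{cur}}$ in $G[A_{\mathrm{cur}},B']$ has at most $\pi^*_G(|B'|)\le c|B'|\le c^2|W|$ classes. So
\[
|A'|\le c^2\Bigl\lceil \tfrac{|A_{\mathrm{cur}}|}{2c^2}\Bigr\rceil< c^2\Bigl(\tfrac{|A_{\mathrm{cur}}|}{2c^2}+1\Bigr)=\tfrac{|A_{\mathrm{cur}}|}{2}+c^2 .
\]

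The additive $c^2$ is the step I expect to be the main obstacle. To remove it, I would first try the following crude bound. Since $|A_{\mathrm{cur}}|>12c^2\log|A|$ during the loop, the additive term is small relative to $|A_{\mathrm{cur}}|$. A cleaner route is to show that $c^2\lceil |A_{\mathrm{cur}}|/(2c^2)\rceil$ is at most $|A_{\mathrm{cur}}|/2$ rounded up to an integer relation. This should hold because $|A'|$ is an integer and the ceiling adds less than $c^2$. However, this route only gives $|A'|\le |A_{\mathrm{cur}}|/2+c^2$.

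My fallback is therefore to carry the additive term through the recursion. After $t$ iterations, $|A_{\mathrm{cur}}|\le |A|/2^t+2c^2$. It then suffices that this quantity is already at most $12c^2\log|A|$ once $t=\log|A|-1$. Indeed, at that point $|A|/2^t=2$, which gives $2+2c^2\le 12c^2\log|A|$ whenever $|A|\ge 2$.

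\emph{Step 2: counting iterations.} Unrolling the recursion gives $|A_{\mathrm{cur}}|\le |A|/2^{t}+2c^2$ after $t$ iterations, by a geometric series. The loop continues only while $|A_{\mathrm{cur}}|>12c^2\log|A|$. After $t=\lceil\log|A|\rceil-1$ iterations, the bound is at most $2+2c^2\le 12c^2\log |A|$ for $|A|\ge 2$. The trivial case $|A|\le 2$ needs no iterations, since then $|A|\le 12c^2\log|A|$ or $|A|=1$. The loop therefore stops within $\log|A|-1$ iterations, after adjusting for rounding.

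Finally, I note that the check for near twins and the sampling do not affect this count. The check can only end the loop earlier, by returning \emph{false}.
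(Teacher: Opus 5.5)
The approach is the paper's. You use the same chain $|A'|\le c|B'|\le c^2|W|<|A_{\mathrm{cur}}|/2+c^2$. You also make the same choice to carry the additive $c^2$ through the recursion rather than eliminate it, so the ``main obstacle'' paragraph is an unnecessary detour. The unrolled bound $|A|/2^t+2c^2$ is the same as well.

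The gap is in the final count, which is off by one as written. Choosing $t=\lceil\log|A|\rceil-1$ only shows that the loop stops within $\lceil\log|A|\rceil-1$ iterations. If $2^k<|A|<2^{k+1}$, that bound is $k$. But an integer $r$ satisfies $r\le\log|A|-1$ only if $r\le k-1$. ``After adjusting for rounding'' does not close this. The exact bound matters, because the crossing-number proof uses $r+1\le\log|A|$.

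The fix is to use the slack you left between $2+2c^2$ and $12c^2\log|A|$. For $|A|\ge 2$, take $t=\lfloor\log|A|\rfloor-1\ge 0$. Then $2^t>|A|/4$, so after $t$ iterations $|A_{\mathrm{cur}}|<4+2c^2\le 12c^2\le 12c^2\log|A|$. The paper spends the same slack by requiring $|A|/2^i\le 10c^2$. This gives $\log|A|-\log(10c^2)$ iterations, and the margin $\log(10c^2)\ge\log 10>3$ absorbs any rounding.

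Separately, your remark that $|A|=1$ needs no iterations is wrong. The threshold is $12c^2\log 1=0$, so the guard $1>0$ holds. Since $A_{\mathrm{cur}}$ stays a singleton, the loop never exits. The statement itself fails in this case; the paper's proof tacitly uses $\log|A|\ge 1$ when it replaces $12c^2\log|A|$ by $12c^2$. So state the assumption $|A|\ge 2$ explicitly instead of calling the case trivial.
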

\begin{proof}
    Let \(A_i\) and \(B_i\) be the values of \(A_{\mathrm{cur}}\) and \(B_{\mathrm{cur}}\) at the beginning of the \(i\)th iteration of the main loop.
    Thus, \(A_1 = A\) and \(B_1=B\).
    We argue that in the \(i\)th iteration, \(|A_{i + 1}| \leq 1/2 \cdot |A_i| + c^2\).

    Using the size bound $|W_i| = \left\lceil |A_i|/(2c^2) \right\rceil$ for the set \(W_i\) sampled in the \(i\)th iteration,
    and the fact that \(G\) has linearity \(c\), we obtain
    \[
        | A_{i + 1} |
        \le c \cdot | B_{i + 1} |
        \le c^2 \cdot | W_i |
        = c^2 \cdot \left\lceil \frac{|A_i|}{2c^2} \right\rceil
        < c^2 \left(\frac{|A_i|}{2c^2} + 1\right)
        = \frac{1}{2}|A_i| + c^2.
    \]

    We now unfold the recurrence.  By induction,
    \[
        |A_{i+1}| 
        \le
        \frac{|A|}{2^i}
        +
        c^2\Bigl(1 + \tfrac12 + \cdots + \tfrac{1}{2^{i-1}}\Bigr).
    \]
    The geometric sum is bounded by $2$, and, hence,
    \[|A_{i+1}| \le \frac{|A|}{2^i} + 2c^2.\]

    The main loop terminates after iteration \(i\) if $|A_{i+1}| \le 12c^2 \cdot \log |A|$.  Therefore, it is sufficient
    to find $i$ such that
    \[\frac{|A|}{2^i} + 2c^2 \le 12c^2
        \quad\Longleftrightarrow\quad
        \frac{|A|}{2^i} \le 10c^2
        \quad\Longleftrightarrow\quad
    2^i \ge \frac{|A|}{10c^2} \quad\Longleftrightarrow\quad i \ge \log(|A|) - \log(10c^2).\]
    Thus, we stop after at most $\log(|A|)- \log(10c^2)$ iterations. Since $c \geq 1$, this is at most $\log |A|-1$.
\end{proof}

\subsection{Bounding the crossing number}\label{sec:crossing-number}

Having established the bound on the number of iterations, it remains to analyze the resulting crossing number.

\begin{theorem}\label{theorem:resulting-crossing-number}
Given a set system $G = (A, B, E)$ of linearity~$c \ge 1$,
\Cref{algorithm:welzl-orders} either returns \emph{false} or computes an order on \(A\) with crossing number at most $12c^2\log^2 |A|$.
\end{theorem}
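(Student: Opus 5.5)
We argue by backward induction over the iterations of the main loop, using \Cref{lemma:suspend-twins} and \Cref{lemma:2k-increment} as the two steps of each iteration. Let \(A_i,B_i\) be the values of \(A_{\mathrm{cur}},B_{\mathrm{cur}}\) at the start of iteration \(i\), and suppose the loop runs for \(t\) iterations without returning \emph{false}. By \Cref{theorem:iteration_bound}, \(t \le \log|A|-1\). For \(i=t+1,\dots,1\), let \(\prec_i\) be the order on \(A_i\) present in the reconstruction phase right after all pairs pushed during iterations \(\ge i\) have been popped. We will show by downward induction on \(i\) that \(\prec_i\) has crossing number at most \(12c^2\log|A| + (t+1-i)\cdot 12c^2\log|A|\) on \(G[A_i,B_i]\).

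For the base case \(i=t+1\), the order \(\prec_{t+1}\) is arbitrary on \(A_{t+1}\), and the loop has terminated, so \(|A_{t+1}|\le 12c^2\log|A|\). Any order on a set of size \(s\) has at most \(s-1\) adjacent pairs, so its crossing number with respect to any set is below \(12c^2\log|A|\).

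For the inductive step, fix iteration \(i\). It produces \(B_{i+1}=B'\) and \(A_{i+1}=A'\), where \(\mathcal A\) is the twin partition of \(A_i\) in \(G[A_i,B']\). We proceed in two steps.
\begin{enumerate}
\item Popping the pairs of iteration \(i\) inserts each \(v\in A_i\setminus A'\) directly after its representative. Since the stack is LIFO, these pops happen after the pops of all later iterations and before the pops of earlier ones. Inside \(G[A_i,B']\), each such \(v\) is a twin of its representative. Applying \Cref{lemma:suspend-twins} repeatedly, the resulting order \(\prec_i\) on \(A_i\) has the same crossing number on \(G[A_i,B']\) as \(\prec_{i+1}\) has on \(G[A_{i+1},B_{i+1}]\).
\item Because iteration \(i\) passed its check, \(\mathcal B\) is a \((6c^2\log|A|)\)-near twin partition of \(B_i\) in \(G[A_i,B_i]\) with representatives \(B'\). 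By \Cref{lemma:2k-increment}, the crossing number on \(G[A_i,B_i]\) therefore grows by at most \(2\cdot 6c^2\log|A| = 12c^2\log|A|\).
\end{enumerate}

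One point needs care in the first step. Several vertices may be inserted after the same representative, and \Cref{lemma:suspend-twins} states insertion only immediately after an existing vertex \(a\), not after \(a\)'s earlier duplicates. The fix is to note that the lemma's proof works for insertion immediately after any vertex twin to \(a'\). Since inserted vertices stay twins of \(v_{\mathrm{rep}}\) in \(G[A_i,B']\), the argument goes through regardless of where exactly within the block of twins the new vertex ends up.

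At \(i=1\) the final order has crossing number at most \((t+1)\cdot 12c^2\log|A| \le 12c^2\log^2|A|\) on \(G[A,B]=G\), using \(t+1\le\log|A|\).

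The main obstacle is bookkeeping rather than any hard estimate. We must verify that the interleaving of pops across iterations matches this layered induction, which holds because the stack is LIFO. We must also verify that the near-twin guarantee is measured in \(G[A_i,B_i]\), the graph in which the order on all of \(A_i\) lives. Both points follow directly from the algorithm's definitions.
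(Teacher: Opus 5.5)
Your proof is correct and follows essentially the same argument as the paper: a backward induction over the iterations. The base case comes from the loop-termination bound, \Cref{lemma:suspend-twins} handles reinserting the twins of $A_i$ over $B_{i+1}$, and \Cref{lemma:2k-increment} adds $12c^2\log|A|$ per iteration, for a total of $(t+1)\cdot 12c^2\log|A| \le 12c^2\log^2|A|$; your extra care about several vertices sharing a representative is harmless but unnecessary, since the algorithm inserts each $v$ directly after $v_{\mathrm{rep}}$ itself, which is exactly the setting of \Cref{lemma:suspend-twins}.
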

\begin{proof}
We analyze the crossing number of the output order assuming the algorithm does not return \emph{false}.
Let $r \le \log |A| - 1$ be the number of iterations of the main loop (\Cref{theorem:iteration_bound}).
For $0 \le i \le r$, let $(A_i, B_i)$ denote the values of $(A_{\mathrm{cur}}, B_{\mathrm{cur}})$ after iteration $i$,
with $(A_0, B_0) = (A, B)$ being the initial values.
Let $\mathcal{B}_i$ be the partition of $B_{i-1}$ computed in iteration $i$ with representatives $B_i$.

The reconstruction phase builds an order $\prec$ on $A$ in $r+1$ steps, proceeding in reverse order of the iterations.
We show by induction that after step $i$ the current order on $A_i$ has crossing number at most $(r - i + 1) \cdot 12c^2 \log |A|$ with respect to $G[A_i, B_i]$.

\textbf{Base case} ($i = r$):
The algorithm chooses an arbitrary order on $A_r$.
Since $|A_r| \le 12c^2 \log |A|$ by the loop termination condition
and any order on $n$ elements has crossing number at most $n - 1$,
the crossing number is at most $12c^2 \log |A|$.

\textbf{Inductive step} ($i \to i-1$):
Suppose we have an order on $A_i$ with crossing number at most $m = (r-i+1) \cdot 12c^2 \log |A|$ with respect to $G[A_i, B_i]$.
The reconstruction inserts each vertex $a \in A_{i-1} \setminus A_i$ immediately after its representative in $A_i$.
Since $A_i$ consists of representatives of the twin partition of $A_{i-1}$ over $B_i$,
each inserted vertex $a$ is a twin of its representative with respect to $B_i$.
By \Cref{lemma:suspend-twins}, these insertions do not change the crossing number with respect to $G[A_{i-1}, B_i]$.

It remains to account for the vertices in $B_{i-1} \setminus B_i$.
Since the algorithm did not return \emph{false} in iteration $i$,
$\mathcal{B}_i$ is a $(6c^2 \log |A|)$-near twin partition of $B_{i-1}$ with representatives $B_i$.
By \Cref{lemma:2k-increment} with $k = 6c^2 \log |A|$,
the crossing number with respect to $G[A_{i-1}, B_{i-1}]$ is at most $m + 2k = m + 12c^2 \log |A|$.
This equals $(r - (i-1) + 1) \cdot 12c^2 \log |A|$, completing the induction.

After step $0$, we have an order on $A_0 = A$ with crossing number at most
\[
    (r+1) \cdot 12c^2 \log |A| \le \log |A| \cdot 12c^2 \log |A| = 12c^2 \log^2 |A|. \qedhere
\]
\end{proof}

\subsection{Runtime analysis}\label{runtime}
We get the following lemma from classical partition refinement techniques~\cite{PT87}. 
See also~\cite{MPS} for a more direct graph-theoretic formulation of this result via modular decomposition.

\begin{lemma}[\cite{PT87,MPS}]\label{lemma:runtime-twin-classes}
    For a bipartite graph, $G = (A, B, E)$,
    the partition of \(A\) into twin classes over \(B\) can be computed in time \(\mathcal{O}(|A|+|B|+|E|)\).
\end{lemma}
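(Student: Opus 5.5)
This is the classical partition refinement argument. The plan is to refine the partition of $A$ one vertex of $B$ at a time, charging the work for each $b$ to its degree.

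We start with the trivial partition $\{A\}$, stored as a doubly linked list of classes, where each vertex of $A$ holds a pointer to its class. We then process the vertices $b \in B$ one at a time and refine the current partition by $N(b)$. That is, we split every class $P$ into $P \cap N(b)$ and $P \setminus N(b)$, discarding empty parts.

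The invariant is that after processing $b_1,\dots,b_j$, two vertices of $A$ lie in the same class if and only if they have the same adjacency to each of $b_1,\dots,b_j$. This follows by induction on $j$, since a split by $N(b_j)$ separates exactly those pairs that disagree on $b_j$. After all of $B$ is processed, the classes are therefore exactly the twin classes over $B$.

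The main obstacle is making a single refinement step cost $\mathcal{O}(1+\deg(b))$ rather than time proportional to the sizes of the affected classes. The standard trick handles this. We iterate over $N(b)$, and for each $a \in N(b)$ with class $P$ we move $a$ into a new companion class $P'$. This companion is created lazily the first time $P$ is touched during this step and is recorded in a list of touched classes. Moving $a$ is an $\mathcal{O}(1)$ unlink-and-append operation, because classes are doubly linked lists and each vertex stores its own list node.

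After the pass over $N(b)$ we walk the list of touched classes, which has length at most $\deg(b)$. For each touched $P$ we delete $P$ if it became empty (the whole class lay in $N(b)$), and we clear the companion pointer. The step therefore costs $\mathcal{O}(1+\deg(b))$.

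Summing over $B$ gives $\mathcal{O}(|B|+|E|)$ for all refinements. Initialization and the final output (reading off the classes, choosing for each its first element as representative, and recording each vertex's representative) cost $\mathcal{O}(|A|)$. The total is $\mathcal{O}(|A|+|B|+|E|)$, as claimed.
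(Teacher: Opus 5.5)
Your proposal is correct and is essentially the classical partition-refinement argument (Paige--Tarjan) that the paper invokes by citation without giving a proof of its own. Splitting each class by $N(b)$ via lazily created companion classes, at cost $\mathcal{O}(1+\deg(b))$ per $b \in B$, and summing over $B$ plus $\mathcal{O}(|A|)$ for initialization and output yields exactly the claimed $\mathcal{O}(|A|+|B|+|E|)$ bound.
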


The sampling process of \Cref{algorithm:welzl-orders} can be implemented, for example, using \emph{reservoir sampling}~\cite{vitter_random_1985}.
\begin{lemma}[\cite{vitter_random_1985}]\label{lem:samplingruntime}
    Given a set \(A\) and a target size \(s \le |A|\), one can sample \(W \subseteq A\) of size \(s\) uniformly at random in time \(\mathcal{O}(|A|)\).
\end{lemma}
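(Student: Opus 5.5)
We would prove the statement by analyzing the classical \emph{reservoir sampling} procedure (Algorithm~R of~\cite{vitter_random_1985}). Enumerate $A$ arbitrarily as $a_1,\dots,a_n$ with $n=|A|$, and keep an array $R[1..s]$. Initialize $R[j]\gets a_j$ for $j\le s$. Then for each $i=s+1,\dots,n$, draw $j$ uniformly from $\{1,\dots,i\}$; if $j\le s$, set $R[j]\gets a_i$. Finally output $W=\{R[1],\dots,R[s]\}$. Throughout, we assume the standard model in which a uniform integer from a range of size at most $n$ can be drawn in $\mathcal{O}(1)$ time.

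\textbf{Runtime.} Each element of $A$ is handled once, with a constant number of operations: one random draw, one comparison, and possibly one array write. The total time is therefore $\mathcal{O}(s+n)=\mathcal{O}(|A|)$. The entries of $R$ are always distinct elements, so $|W|=s$.

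\textbf{Correctness.} The main step is to show by induction on $i\ge s$ that after processing $a_1,\dots,a_i$, the set stored in $R$ is a uniformly random $s$-subset of $\{a_1,\dots,a_i\}$.
\begin{itemize}
\item The base case $i=s$ is trivial.
\item For the step $i-1\to i$, fix a target $s$-subset $T\subseteq\{a_1,\dots,a_i\}$.
\item If $a_i\notin T$, the reservoir equals $T$ exactly when it was $T$ before and $j>s$. This has probability $\binom{i-1}{s}^{-1}\cdot\frac{i-s}{i}=\binom{i}{s}^{-1}$.
\item If $a_i\in T$, the reservoir becomes $T$ exactly when $j\le s$ and the previous reservoir was $(T\setminus\{a_i\})\cup\{x\}$ for some $x\in\{a_1,\dots,a_{i-1}\}\setminus T$, and $j$ is the slot holding $x$. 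There are $i-s$ choices of $x$, each with one matching slot among the $i$ possible values of $j$. The probability is therefore $(i-s)\cdot\binom{i-1}{s}^{-1}\cdot\frac1i=\binom{i}{s}^{-1}$.
\end{itemize}
In both cases the probability is $\binom{i}{s}^{-1}$, which completes the induction. Taking $i=n$ gives the claim.

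The only step needing care is this probability bookkeeping in the case $a_i\in T$. There we use that the slot index $j$ is uniform and independent of the previous reservoir, so each predecessor configuration leads to $T$ with probability exactly $1/i$. Everything else is routine.
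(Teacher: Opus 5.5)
Your proposal is correct and takes the same route as the paper. The paper gives no argument of its own and simply points to Vitter's reservoir sampling; you reproduce Algorithm~R, prove by induction that the reservoir is a uniform $s$-subset (correctly using that the slot index $j$ is drawn independently of the previous array arrangement), and account for the $\mathcal{O}(|A|)$ runtime under the same $\mathcal{O}(1)$-time random-integer assumption the citation relies on.
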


\begin{lemma}\label{lemma:runtime-check-partition}
    Let $G = (A, B, E)$ be a bipartite graph, let $\mathcal{P} = \{P_1, \dots, P_\ell\}$ be a partition of $B$,
    and let \(P = \{v_1,\dots,v_\ell\}\) be a set of representatives of each partition class.
    In time \(\mathcal{O}(|A|+|B|+|E|)\), one can find the smallest \(k\) such that \(\mathcal P\) is a \(k\)-near twin partition of \(B\) with representatives \(P\), that is, one can compute \(k = \max_{i} \max_{v \in P_i} |N(v_i) \symdiff N(v)|\).
\end{lemma}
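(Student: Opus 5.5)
The plan is to compute, for every $v \in B$ with representative $v_i$ (the representative of the class containing $v$), the quantity $|N(v_i) \symdiff N(v)|$ in time proportional to $\deg(v)+\deg(v_i)$ plus constant overhead. The basic identity is
\[
|N(v_i) \symdiff N(v)| = |N(v_i)| + |N(v)| - 2\,|N(v_i) \cap N(v)|.
\]
Degrees are read off the adjacency lists in $\mathcal{O}(|B|+|E|)$ total, so only the intersection sizes need work.

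The catch is the cost of doing this naively. If each intersection is computed by marking $N(v_i)$ separately for every $v \in P_i$, then $\deg(v_i)$ is paid $|P_i|$ times, which can be superlinear. I therefore process class by class.

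\begin{enumerate}
\item Precompute an array $\mathrm{rep}[v]$ giving the representative of each $v \in B$. This takes $\mathcal{O}(|B|)$ by scanning the classes.
\item Keep a global boolean array $\mathrm{mark}$ indexed by $A$, initialised to false once in $\mathcal{O}(|A|)$.
\item For each class $P_i$, mark all $a \in N(v_i)$, costing $\mathcal{O}(\deg(v_i))$.
\item For each $v \in P_i$, scan $N(v)$ and count the marked vertices. This gives $|N(v_i)\cap N(v)|$ in $\mathcal{O}(\deg(v))$, and the identity above then gives the symmetric difference size.
\item Unmark $N(v_i)$ in $\mathcal{O}(\deg(v_i))$, so the array is clean for the next class.
\end{enumerate}

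Each representative is processed exactly once, and each $v\in B$ lies in exactly one class. The total time is therefore $\mathcal{O}\bigl(|A| + |B| + \sum_i \deg(v_i) + \sum_{v\in B}\deg(v)\bigr) = \mathcal{O}(|A|+|B|+|E|)$. Finally, $k$ is the maximum over all computed values, taking $k=0$ if $B$ is empty.

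The one step needing care is this amortisation. Grouping by class ensures that marking and unmarking the representative's neighbourhood is charged once per class, not once per member. Resetting only the touched entries avoids reinitialising the $|A|$-sized array per class. I assume the partition is given in a form, such as lists of classes or a class index per vertex, from which the classes can be enumerated in linear time.
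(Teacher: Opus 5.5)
Your proof is correct. It shares the paper's skeleton: the identity $|N(v_i)\symdiff N(v)| = |N(v_i)|+|N(v)|-2|N(v_i)\cap N(v)|$, with degrees read off in linear time so that only the intersection sizes need work.

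The difference is how you test membership $a \in N(v_i)$:
\begin{itemize}
\item \textbf{The paper} builds a perfect hash map over all pairs $(a,i)$ with $a\in N(v_i)$, of which there are at most $|E|$. It stores each $b\in B$'s class index and makes one pass over all edges $(a,b)$. The counter at $b$ is incremented whenever $a\in N(v_i)$ for the class $P_i \ni b$.
\item \textbf{You} process the classes one at a time with a single boolean array over $A$, marking and then unmarking $N(v_i)$. Your amortisation plays the role of the hash map's size bound: each representative is handled once and representatives are distinct, so $\sum_i \deg(v_i)\le|E|$.
\end{itemize}

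Your route needs nothing beyond an array indexed by $A$ and gives a deterministic worst-case linear bound. The standard linear-time constructions of perfect hash maps are randomized, with expected running time. The paper's route instead handles all intersection counts in one uniform scan over the edges, in arbitrary order, at the cost of relying on hashing. Since the surrounding algorithm is randomized anyway, either version suffices for \Cref{theorem:runtime-bound}.
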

\begin{proof}
    First note that for any two vertices $v_1, v_2$ we have
    \[
        |N(v_1) \symdiff N(v_2)| = |N(v_1)| + |N(v_2)| - 2|N(v_1) \cap N(v_2)|.
    \]
    The neighborhood sizes of all vertices in \(B\) can be computed in linear time.
    To compute the intersections \(|N(v_i) \cap N(v)|\), we proceed as follows.
    Initialize a perfect hash map that lets us query, for each \(a \in A\) and part index \(i\), whether \(a \in N(v_i)\) in constant time.
    Initialize pointers that let us query, for each \(b \in B\), the index \(i\) with \(b \in P_i\) in constant time.
    Then, initialize a counter to zero for each \(b \in B\).
    Loop through all \((a,b) \in E\) and increment the counter at \(b\) if \(a \in N(v_i)\), where \(i\) is the index such that \(b \in P_i\).
    Using the above data structures, this takes constant time per edge.
    In the end, the counter at \(b \in P_i\) equals \(|N(v_i) \cap N(b)|\).
    Given these values, the maximum over all \(i\) and \(v \in P_i\) can be computed in \(\mathcal{O}(|B|)\) additional time.
\end{proof}

We are finally able to establish the runtime of our algorithm.

\begin{theorem}\label{theorem:runtime-bound}
    Given a set system $G = (A, B, E)$ of linearity~$c \ge 1$,
    \Cref{algorithm:welzl-orders} terminates in time $\mathcal{O}\bigl((|A| + |E|)\log |A|\bigr)$.
\end{theorem}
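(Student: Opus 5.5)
The plan is to bound the cost of a single iteration of the main loop by $\mathcal{O}(|A|+|B|+|E|)$, more precisely by $\mathcal{O}(|A_{\mathrm{cur}}|+|B_{\mathrm{cur}}|+|E(G[A_{\mathrm{cur}},B_{\mathrm{cur}}])|)$, and then multiply by the number of iterations. By \Cref{theorem:iteration_bound} there are at most $\log|A|-1$ iterations. Since $A_{\mathrm{cur}}\subseteq A$ and $B_{\mathrm{cur}}\subseteq B$ throughout, each iteration costs $\mathcal{O}(|A|+|B|+|E|)$, and the loop as a whole costs $\mathcal{O}((|A|+|B|+|E|)\log|A|)$. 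The only wrinkle is to get rid of the $|B|$ term, since the claimed bound is $\mathcal{O}((|A|+|E|)\log|A|)$. I would handle this in a preprocessing step. All vertices of $B$ with empty neighborhood are twins of each other, so they can be collapsed to a single representative. By \Cref{lemma:2k-increment} with $k=0$, this changes nothing about crossing numbers; equivalently, such vertices never contribute crossings at all. After this step, $|B|\le|E|+1$, so $|A|+|B|+|E|=\mathcal{O}(|A|+|E|)$. Assuming the representation allows it, the collapsing itself costs $\mathcal{O}(|B|)$, which is part of reading the input.

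Next I would check each step of one iteration in turn. Sampling $W$ takes $\mathcal{O}(|A_{\mathrm{cur}}|)$ by \Cref{lem:samplingruntime}. The partition $\mathcal B$ of $B_{\mathrm{cur}}$ in $G[W,B_{\mathrm{cur}}]$ and the partition $\mathcal A$ of $A_{\mathrm{cur}}$ in $G[A_{\mathrm{cur}},B']$ each take linear time in the size of the induced subgraph by \Cref{lemma:runtime-twin-classes}. This presumes we can extract the induced subgraphs in linear time, which is done by marking the members of $W$, respectively $B'$, and filtering the adjacency lists. Representatives are chosen as arbitrary members of each class. The guarantee check runs in linear time by \Cref{lemma:runtime-check-partition}, by computing the minimal $k$ and comparing it with $6c^2\log|A|$. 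Bookkeeping pushes at most $|A_{\mathrm{cur}}|$ pairs. Finally, replacing $(A_{\mathrm{cur}},B_{\mathrm{cur}})$ by $(A',B')$ and restricting the edge lists costs $\mathcal{O}(|A_{\mathrm{cur}}|+|B_{\mathrm{cur}}|+|E_{\mathrm{cur}}|)$.

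After the loop, the base construction is $\mathcal{O}(|A|)$. For the reconstruction, I would keep $\prec$ as a doubly linked list with a pointer from each vertex to its list node. Each pop then inserts $v$ after $v_{\mathrm{rep}}$ in $\mathcal{O}(1)$ time. This is valid because $v_{\mathrm{rep}}$ lies in a later $A_{\mathrm{cur}}$, so it is already in the list when $v$ is popped in reverse order. In total, at most $|A|$ pairs are ever pushed, since each vertex is removed at most once, so this phase costs $\mathcal{O}(|A|)$.

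Summing gives the bound. I expect the main obstacle to be the implementation details that keep every step genuinely linear in the size of the current subgraph rather than in the original one: computing induced subgraphs, and ensuring that the perfect hashing in \Cref{lemma:runtime-check-partition} is used deterministically or in expected linear time. The bound would not be lost either way, since $|A_{\mathrm{cur}}|\le|A|$, so plain $\mathcal{O}(|A|+|E|)$ per iteration suffices.
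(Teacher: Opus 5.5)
Your proposal is correct and matches the paper's proof. It bounds each iteration by linear cost via the sampling, twin-partition and near-twin-check lemmas, takes at most $\log|A|$ iterations from the iteration bound, and adds an $\mathcal{O}(|A|)$ linked-list reconstruction. The one difference is your preprocessing that collapses the empty-neighborhood vertices of $B$, which is not needed: $\mathcal{F}$ is a set of distinct subsets, so at most one vertex of $B$ has empty neighborhood and $|B| \le |E|+1$ holds outright. This is how the paper drops the $|B|$ term, and it is also what keeps your $\mathcal{O}(|B|)$ preprocessing cost within the claimed bound.
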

\begin{proof}
The initialization of the algorithm can be performed in time $\mathcal{O}(|A|+|B|)$.
By \Cref{theorem:iteration_bound}, the main loop executes at most $\log(|A|)$ times.
Each execution incurs the following costs:

\begin{itemize}
    \item \emph{Sampling.} By \Cref{lem:samplingruntime}, the sampling can be done in time \(\mathcal{O}(|A|)\). \smallskip

    \item \emph{Partitioning.}  As shown in \Cref{lemma:runtime-twin-classes}, the twin partitions \(\mathcal{A}\) and \(\mathcal{B}\) can be computed in time $\mathcal{O}(|A|+|B|+|E|)$.
        Representatives \(A'\) and \(B'\) can clearly be computed in time \(\mathcal{O}(|A|)\) and \(\mathcal{O}(|B|)\), respectively.

    \item \emph{Checking guarantees.} We can use \Cref{lemma:runtime-check-partition} with partition \(\mathcal B\), representatives \(B'\), and graph \(G[A_{\mathrm{cur}},B_{\mathrm{cur}}]\)
        to check in time \(\mathcal{O}(|A|+|B|+|E|)\) if 
        the required guarantee is satisfied.
    \item \emph{Bookkeeping.}
        In time \(\mathcal{O}(|A|)\), we can set up a data structure that allows us to look up for each \(u \in A\) its representative \(u_{\mathrm{rep}} \in A'\) in constant time.
        As at most \(|A|\) pairs are pushed onto the stack, the total runtime is \(\mathcal{O}(|A|)\).

\end{itemize}

Afterwards, for the reconstruction,
we represent the order \(\prec\) as a doubly linked list with a hash table for \(\mathcal{O}(1)\) position lookup. 
This allows insertions in \(\mathcal{O}(1)\) time.
Since \(|L| \le |A|\), the total reconstruction time is \(\mathcal{O}(|A|)\).
Summing over at most \(\log |A|\) iterations yields total time \(\mathcal{O}\bigl((|A| + |B| + |E|)\log |A|\bigr)\). We have $|B| \le |E|+1$, and thus the bound simplifies to $\mathcal{O}\bigl((|A| + |E|)\log |A|\bigr)$.
\end{proof}

With this runtime bound established, we proceed to the probability that a run returns \emph{false}.

\subsection{Bounding the failure probability}\label{section:probability_of_bad_run}
Intuitively, a sample is bad if there exist $b,b'\in B$ with a large symmetric difference $|N(b)\symdiff N(b')| \geq 6c^2 \log N$ such that the sample $W$ contains no vertex from this symmetric difference, and therefore fails to distinguish $b$ and $b'$.

\begin{lemma}\label{lemma:missing_large_set_probability}
    Let $G=(A,B,E)$ be a bipartite graph with linearity~$c \ge 1$,
    and let $X \subseteq A$ with $|X|\ge 6c^2 \log N$ for some \(N \geq |A|\).
    For $W \subseteq A$ sampled uniformly at random of size $|W| = \left\lceil |A|/(2c^2) \right\rceil$, we have
    \[
        \Pr\!\left[\,X\cap W=\emptyset\,\right]\ \le\ N^{-3}.
    \]
\end{lemma}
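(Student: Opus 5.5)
The plan is to bound the probability that a uniformly random $s$-subset of $A$, with $s=\lceil |A|/(2c^2)\rceil$ and $n=|A|$, avoids a fixed set $X$ of size $x=|X|$. The linearity of $G$ is not needed here beyond $c\ge 1$; only the sizes matter. We may assume $x\le n-s$, since otherwise $X\cap W\neq\emptyset$ always and the probability is $0$.

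The probability is exactly a ratio of binomial coefficients, which we write as a telescoping product:
\[
\Pr[X\cap W=\emptyset]=\frac{\binom{n-x}{s}}{\binom{n}{s}}=\prod_{j=0}^{s-1}\frac{n-x-j}{n-j}=\prod_{j=0}^{s-1}\Bigl(1-\frac{x}{n-j}\Bigr)\le\Bigl(1-\frac{x}{n}\Bigr)^{s}.
\]
The inequality holds because each factor satisfies $1-\tfrac{x}{n-j}\le 1-\tfrac{x}{n}$. This is the standard observation that sampling without replacement misses $X$ with probability at most that of sampling with replacement.

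Next we apply $1-t\le e^{-t}$ together with $s\ge n/(2c^2)$:
\[
\Pr[X\cap W=\emptyset]\le \exp\!\Bigl(-\frac{xs}{n}\Bigr)\le \exp\!\Bigl(-\frac{x}{2c^2}\Bigr)\le \exp\!\Bigl(-\frac{6c^2\log N}{2c^2}\Bigr)=e^{-3\log N}.
\]
Since $\log$ denotes the base-$2$ logarithm, $e^{-3\log N}=N^{-3\log e}\le N^{-3}$, because $\log e>1$ and $N\ge 1$. This completes the bound.

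There is no real obstacle in this argument. The only care points are the two just handled: the case $x>n-s$, where the binomial ratio is $0$ and the claim is trivial, and the base-$2$ versus natural logarithm conversion, which goes in the favourable direction.
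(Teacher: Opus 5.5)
Your proposal is correct and follows essentially the same route as the paper: write the avoidance probability as a ratio of binomial coefficients, bound the telescoping product by $(1-|X|/|A|)^{|W|}$, apply $1-t\le e^{-t}$ with $|W|\ge |A|/(2c^2)$, and convert $e^{-3\log N}=N^{-3/\ln 2}\le N^{-3}$. Your explicit handling of the case $|X|>|A|-|W|$ is a small extra: it keeps every factor positive so the factor-wise comparison is valid, a point the paper passes over without comment.
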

\begin{proof}
    Since $W$ is chosen uniformly among all subsets of this size, we have
    \[
        \Pr[X \cap W = \emptyset]
        \;=\;
        \frac{\binom{|A| - |X|}{|W|}}{\binom{|A|}{|W|}}
        \;\le\;
        \prod_{i=0}^{|W|-1} \frac{|A| - |X| - i}{\,|A| - i\,}
        \;\le\;
        \left(1 - \frac{|X|}{|A|}\right)^{\!|W|}
        \;\le\;
        \exp\!\left(-\frac{|W|}{|A|}|X|\right).
    \]
    With $|W| = \left\lceil |A|/(2c^2) \right\rceil$ and $|X| \ge 6c^2 \log N$, we obtain
    \begin{align*}
        \Pr[X \cap W = \emptyset]
        \;\le\;
        \exp\!\left(-\frac{|W|}{|A|}|X|\right)
        \;&\leq\;
        \exp\!\left(-\frac{1}{2c^2}\cdot 6c^2 \log N\right)
        \;\\&=\;
        \exp\!\left(- 3 \log N\right)
        \;\\&=\;
        N^{\tfrac{-3}{\ln 2}},
        \;\\&\leq\;
        N^{-3},
    \end{align*}
    which proves the claim.
\end{proof}

\begin{lemma}\label{lemma:twin_class_occurrence_probability}
    Let $G = (A, B, E)$ be a bipartite graph of linearity~$c \ge 1$ and let \(N \geq |A|\).
    For a uniformly random sample \(W \subseteq A\) of
    size $|W| = \lceil |A|/2c^2 \rceil$,
    the probability that there are $b, b' \in B$ with \(N(b) \cap W = N(b') \cap W\) and
    $|N(b) \symdiff N(b')| \ge 6c^2 \log N$ is at most \(c^2/N\).
\end{lemma}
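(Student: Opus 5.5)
The plan is a union bound over pairs of \emph{distinct neighborhoods}, not over pairs of vertices of \(B\). Pairs of vertices are too many: there may be up to \(|B|^2\) of them and \(|B|\) is not bounded in terms of \(N\). Linearity fixes this, because it bounds how many distinct neighborhoods can occur in \(B\).

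\emph{Step 1 (reduce to one pair).} Fix \(b,b' \in B\) and set \(X := N(b) \symdiff N(b') \subseteq A\). Then \(N(b) \cap W = N(b') \cap W\) holds if and only if \(X \cap W = \emptyset\). So if \(|X| \ge 6c^2 \log N\), \Cref{lemma:missing_large_set_probability} applies directly. It gives that the bad event for this pair has probability at most \(N^{-3}\).

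\emph{Step 2 (count relevant pairs).} Whether the event holds for a pair \((b,b')\) depends only on the sets \(N(b)\) and \(N(b')\). So we may pick one vertex from each twin class of \(B\) in \(G[A,B]\) and union-bound over pairs of these classes only. By the definition of \(\pi_G\) with \(A' = A\), the number of twin classes is at most \(\pi_G(|A|) \le c|A| \le cN\). Hence there are at most \(\binom{cN}{2} \le c^2N^2\) unordered pairs of distinct classes. Pairs inside one class have \(|X| = 0\) and can never be bad.

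\emph{Step 3 (union bound).} Summing the per-pair bound over these pairs gives total probability at most \(c^2 N^2 \cdot N^{-3} = c^2/N\), as claimed.

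The only step needing care is Step 2, namely recognising that the union bound must range over distinct neighborhoods rather than over \(B\) itself. There the primal shatter function bound \(\pi_G(|A|) \le c|A|\) is exactly what makes the count polynomial in \(N\). The rest is a direct invocation of \Cref{lemma:missing_large_set_probability}.
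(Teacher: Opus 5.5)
Your proof is correct and follows the paper's argument essentially verbatim: you restrict to one representative per twin class of \(B\), of which there are at most \(c|A|\) by the primal shatter bound. You then apply \Cref{lemma:missing_large_set_probability} to \(X = N(b) \symdiff N(b')\) for each pair, and a union bound gives \(c^2|A|^2 N^{-3} \le c^2/N\). The only nitpick is your remark that within-class pairs ``can never be bad'': this fails when \(N = 1\), since the threshold \(6c^2\log N\) is then \(0\). In that case the claimed bound \(c^2/N \ge 1\) is trivial, and the paper's proof makes the same implicit assumption.
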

\begin{proof}
As the graph has linearity \(c\), the vertices in \(B\) have at most $c \cdot |A|$ distinct neighborhoods in \(A\).
To prove this lemma, it is sufficient to restrict the choices of \(b,b'\) to representatives of the $\le c \cdot |A|$ twin classes of \(B\).
We can hence assume without loss of generality that \(|B| \le c \cdot |A|\).
Fix $b, b' \in B$.
By \Cref{lemma:missing_large_set_probability} with $X = N_G(b) \symdiff N_G(b')$ we have
\[
\Pr\bigl[|X| \ge 6c^2 \log N \textnormal{~and~} X \cap W = \emptyset\,\bigr] \le N^{-3}.
\]

There are $\binom{|B|}{2} < |B|^2 \le c^2 |A|^2$ unordered pairs $b,b' \in B$ that could satisfy this condition. 
By the union bound, the probability that \emph{any} such pair satisfies it is at most
\(c^2 |A|^2/N^3 \leq c^2/N\), proving the lemma.
\end{proof}

We now bound the probability that the algorithm returns \emph{false}.
Recall that the algorithm returns \emph{false} only when the guarantee check fails,
which happens when the sampled partition $\mathcal{B}$ is not a $(6c^2 \log |A|)$-near twin partition.

\begin{theorem}\label{theorem:probability_of_error}
    \Cref{algorithm:welzl-orders} returns \emph{false} with probability at most \(1/3\).
\end{theorem}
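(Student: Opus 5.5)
The plan is a union bound over the iterations of the main loop, applying \Cref{lemma:twin_class_occurrence_probability} once per iteration with the fixed parameter $N = |A|$.

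First, I will relate the failure event in a single iteration to the event controlled by that lemma. Consider iteration $i$, with current sets $A_{i-1}, B_{i-1}$ and sample $W_i \subseteq A_{i-1}$ of size $\lceil |A_{i-1}|/(2c^2) \rceil$. The partition $\mathcal{B}_i$ is the twin partition of $B_{i-1}$ in $G[W_i, B_{i-1}]$. Hence every $b$ and its representative $b_{\mathrm{rep}}$ satisfy $N(b) \cap W_i = N(b_{\mathrm{rep}}) \cap W_i$. If the check fails, some such pair has $|N(b) \symdiff N(b_{\mathrm{rep}})| > 6c^2 \log |A|$, where neighborhoods are taken in $G[A_{i-1}, B_{i-1}]$. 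That is exactly the bad event of \Cref{lemma:twin_class_occurrence_probability}, applied to the graph $G[A_{i-1}, B_{i-1}]$ with $N = |A| \ge |A_{i-1}|$.

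Second, I must verify the hypothesis that this induced subgraph has linearity at most $c$. Shatter functions are monotone under taking induced subgraphs, because restricting the ground set or the family can only reduce the number of traces. So both the primal and dual shatter functions of $G[A_{i-1}, B_{i-1}]$ are bounded by $ck$. One subtlety is conditioning: $A_{i-1}, B_{i-1}$ depend on earlier samples, but $W_i$ is drawn fresh and uniformly from $A_{i-1}$. The lemma therefore applies conditionally on any history, giving failure probability at most $c^2/|A|$ in each iteration.

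Third, by \Cref{theorem:iteration_bound} there are at most $\log |A| - 1$ iterations, so the union bound gives total failure probability at most $c^2 (\log |A|)/|A|$. The loop only runs when $|A| > 12c^2 \log |A|$, since otherwise no iteration happens and the probability is $0$. In that case $c^2 \log |A| / |A| < 1/12 \le 1/3$, which proves \Cref{theorem:probability_of_error}.

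The main obstacle is the bookkeeping in the first step. I need to make sure the check is phrased in terms of neighborhoods in the current graph $G[A_{i-1}, B_{i-1}]$ rather than in $G$, and that the non-strict threshold in the lemma ($\ge$) covers the strict violation ($>$) of the check; it does. I also need the lemma to apply with $N$ larger than the current ground set, which its statement explicitly allows.
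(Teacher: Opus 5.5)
Your proposal is correct and matches the paper's proof. In each iteration you apply \Cref{lemma:twin_class_occurrence_probability} to the current graph $G[A_{\mathrm{cur}},B_{\mathrm{cur}}]$ with $N=|A|$, take a union bound over at most $\log|A|$ iterations, and use $|A|>12c^2\log|A|$ (which holds whenever the loop runs at all) to get a bound of $1/12<1/3$. Your remarks that the induced subgraph inherits linearity at most $c$ and that the lemma applies conditionally on the history of earlier samples spell out details the paper leaves implicit.
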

\begin{proof}
    Let $G = (A, B, E)$ be the input graph.
    In a given iteration of the main loop, let $G[A_{\mathrm{cur}}, B_{\mathrm{cur}}]$ be the current graph and \(W \subseteq A_{\mathrm{cur}}\) the sample.
    We call \(W\) \emph{bad} if for the resulting partition $\mathcal{B}$ there exist $b, b' \in B_{\mathrm{cur}}$ in the same part of \(\mathcal{B}\) with
    \[
        |(N(b) \cap A_{\mathrm{cur}}) \symdiff (N(b') \cap A_{\mathrm{cur}})| > 6c^2 \log |A|.
    \]
    The algorithm only fails the guarantee check and returns \emph{false} if \(W\) is bad.
    By \Cref{lemma:twin_class_occurrence_probability} applied to the current graph with \(N = |A|\), the probability that \(W\) is bad is at most $c^2 / |A|$.

    By \Cref{theorem:iteration_bound}, there are at most $\log |A|$ iterations.
    By the union bound, the probability that any iteration has a bad sample is at most \(\log |A| \cdot c^2/|A|\).
    The main loop only executes when $|A_{\mathrm{cur}}| > 12c^2 \log |A|$,
    so in particular $|A| \ge 12c^2 \log |A|$.
    Therefore,
    \[
        \log |A| \cdot \frac{c^2}{|A|}
        \le \frac{c^2 \log |A|}{12c^2 \log |A|}
        = \frac{1}{12}
        < \frac{1}{3}. \qedhere
    \]
\end{proof}

\section{Extension for higher VC-dimensions}
By changing some parameters in our algorithm, we can extend it to compute Welzl orders for set systems with higher VC-dimension,
as claimed in \Cref{thm:main2}. We sketch the main changes and omit routine details.
For set systems whose primal and dual shatter functions are bounded by \(c \cdot k^d\), this extension returns an order with crossing number \(\mathcal{O}(|A|^{1-\tfrac{1}{d^2}} \cdot \log^2 |A|)\) while preserving the near-linear runtime.

We again shift to our representation of a set system as a bipartite graph $G = (A, B, E)$.
We denote the color classes at the beginning of an iteration by $A_{\mathrm{cur}}$ and $B_{\mathrm{cur}}$.
We capture bounded primal and dual VC-dimension by assuming the primal and dual shatter functions satisfy \(\pi_\mathcal{F}(k),\pi^*_\mathcal{F}(k) \le c \cdot k^d\) for some \(c \ge 1\) and \(d \ge 2\).
There are three adjustments in the extended algorithm:
\begin{itemize}
    \item reduce the sample size to \(\big\lceil \left( \tfrac{|A_{\mathrm{cur}}|}{4c^{d+1}} \right)^{\tfrac{1}{d^2}}\big\rceil\),
    \item change the size bound in the loop invariant to $4 \cdot c^{d+1} \cdot (2d^2)^{d^2} \cdot \log |A|$, and
    \item raise the near-twin parameter to $12c \cdot d \cdot |A|^{1-\tfrac{1}{d^2}} \cdot \log |A|$.
\end{itemize}
The adjusted algorithm can be found in \Cref{algorithm:welzl-orders2}.

\begin{figure}[htbp]
\captionsetup{type=algorithm}
\centering
\fbox{%
\parbox{0.95\linewidth}{
\textbf{Input.} \(c \ge 1\), \(d \geq 2\), and a set system $G=(A,B,E)$ with primal and dual shatter functions \(\pi_\mathcal{F}(k),\pi^*_\mathcal{F}(k) \le c \cdot k^d\).

\textbf{Output.} An order $\prec$ on $A$, or \emph{false}.

\begin{steplist}
  \item \textbf{Initialization.}
    Set $A_{\mathrm{cur}}\!\gets\!A$, $B_{\mathrm{cur}}\!\gets\!B$, and $L\!\gets\!\emptyset$ (stack of stored pairs).

  \item \textbf{Main loop.} 
    While $|A_{\mathrm{cur}}|> 4 \cdot c^{d+1} \cdot (2d^2)^{d^2} \cdot \log |A|$:
    \begin{steplist}
        \item \textbf{Sampling.} 
            Sample $W \subseteq A_{\mathrm{cur}}$ of size 
            $\big\lceil \left( \frac{|A_{\mathrm{cur}}|}{4c^{d+1}} \right)^{\tfrac{1}{d^2}} \big\rceil$ uniformly at random.
        \item \textbf{Partitioning.}
            Let \(\mathcal B\) be the twin partition of \(B_{\mathrm{cur}}\) in \(G[W,B_{\mathrm{cur}}]\) with representatives \(B'\).
            Let \(\mathcal A\) be the twin partition of \(A_{\mathrm{cur}}\) in \(G[A_{\mathrm{cur}},B']\) with representatives \(A'\).

        \item \textbf{Checking guarantees.}
            Check if \(\mathcal B\) is a \((12c \cdot d \cdot |A|^{1-\tfrac{1}{d^2}} \cdot \log |A|)\)-near twin partition of \(B_{\mathrm{cur}}\) in \(G[A_{\mathrm{cur}},B_{\mathrm{cur}}]\) with representatives \(B'\).
            If not, return \emph{false}.
      \item \textbf{Bookkeeping.}
          For each vertex \(v \in A_{\mathrm{cur}} \setminus A'\) with representative \(v_{\mathrm{rep}} \in A'\),
          push the tuple \((v,v_{\mathrm{rep}})\) onto the stack \(L\).
       \item $(A_{\mathrm{cur}},B_{\mathrm{cur}})\gets(A',B')$.
    \end{steplist}

  \item \textbf{Base construction.}\label{algorithm2-step:base_construction}
    Choose an arbitrary order $\prec$ on the remaining vertex set $A_{\mathrm{cur}}$.

  \item \textbf{Reconstruction phase.}\label{algorithm2-step:reconstruction}
    Pop pairs $(v,v_{\mathrm{rep}})$ from $L$ in reverse insertion order and insert each $v$ directly after $v_{\mathrm{rep}}$ in $\prec$.

  \item \textbf{Output.} Return the constructed order $\prec$.
\end{steplist}
}}
\caption{The adjusted algorithm mentioned in \Cref{thm:main2}.}
\label{algorithm:welzl-orders2}
\end{figure}

\begin{proof}[Proof of \Cref{thm:main2}]
By \Cref{theorem:iteration_bound_poly} (proved below), the main loop terminates after at most \(\log|A|-1\) iterations.
The crossing number induction from the proof of \Cref{theorem:resulting-crossing-number} only uses \Cref{lemma:suspend-twins} and \Cref{lemma:2k-increment}, both independent of the shatter functions. Once we plug in the new parameter values, the algorithm returns an order with crossing number at most
\[
4 \cdot c^{d+1} \cdot (2d^2)^{d^2} \cdot \log |A| + 24c \cdot d \cdot |A|^{1-\tfrac{1}{d^2}} \cdot \log^2 |A| = \mathcal{O}(|A|^{1-\tfrac{1}{d^2}} \cdot \log^2 |A|).
\]
The runtime is \(\mathcal{O}(\|S\| \cdot \log\|S\|)\) by the same argument as \Cref{theorem:runtime-bound}, as it only depends on the per-iteration cost and an \(\mathcal{O}(\log|A|)\) bound on the number of iterations, both of which carry over.
The failure probability is at most \(1/3\) (proven in \Cref{section:probability_d_gt_1}).
\end{proof}

\subsection{Number of iterations for higher VC-dimensions}
\begin{theorem}[Extension of \Cref{theorem:iteration_bound} for higher VC-dimensions]\label{theorem:iteration_bound_poly}
    Given a set system $G = (A, B, E)$ with $\max\{\pi_G(k), \pi^*_G(k)\} \leq c \cdot k^d$ for $c \geq 1$, $d \geq 2$ and $k \in \mathbb{N}$,
    the modified algorithm terminates after at most $\log |A|-1$ iterations.
\end{theorem}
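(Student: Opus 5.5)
The plan is to mimic the proof of \Cref{theorem:iteration_bound}. Instead of the additive recurrence $|A_{i+1}| \le \tfrac12|A_i| + c^2$, I will show a clean halving $|A_{i+1}| \le \tfrac12 |A_i|$ in every iteration that is actually executed. As before, let $A_i, B_i$ be the values of $A_{\mathrm{cur}}, B_{\mathrm{cur}}$ at the start of iteration $i$, let $W_i$ be the sample, and write $n_i = |A_i|$.

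\textbf{Step 1: chain the shatter bounds.} $B_{i+1}$ is a set of representatives of the twin partition of $B_i$ over $W_i$, so the primal bound gives $|B_{i+1}| \le c\,|W_i|^d$. Similarly, $A_{i+1}$ consists of representatives of the twin partition of $A_i$ over $B_{i+1}$, so the dual bound gives
\[
|A_{i+1}| \le c\,|B_{i+1}|^d \le c^{d+1}|W_i|^{d^2}.
\]

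\textbf{Step 2: absorb the ceiling.} Set $x = (n_i/(4c^{d+1}))^{1/d^2}$, so that $|W_i| = \lceil x\rceil \le x+1$. This rounding step is the only real obstacle: in the linear case the ceiling was paid for by an additive $c^2$, but here it is raised to the power $d^2$. I would handle it multiplicatively,
\[
(x+1)^{d^2} = x^{d^2}(1+1/x)^{d^2} \le x^{d^2} e^{d^2/x} \le e^{1/2}\,x^{d^2} < 2x^{d^2},
\]
which holds whenever $x \ge 2d^2$. The inequality $x \ge 2d^2$ is equivalent to $n_i \ge 4c^{d+1}(2d^2)^{d^2}$. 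This is implied by the loop condition $n_i > 4c^{d+1}(2d^2)^{d^2}\log|A|$ as soon as $|A|\ge 2$; the case $|A| = 1$ is trivial since the loop never runs. This explains the choice of loop threshold. Combining with Step~1,
\[
|A_{i+1}| \le c^{d+1}\cdot 2\cdot \frac{n_i}{4c^{d+1}} = \frac{n_i}{2}.
\]

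\textbf{Step 3: count iterations.} By induction, $|A_i| \le |A|/2^{i-1}$ for every iteration $i$ that is executed. The loop runs iteration $i$ only if $|A_i| > 4c^{d+1}(2d^2)^{d^2}\log|A|$. For $i = \log|A|$ we would have $|A_i| \le 2$, which is far below the threshold, since the threshold is at least $4\cdot 8^4 \cdot 1$. Hence at most $\log|A| - 1$ iterations are executed. If $\log|A|$ is not an integer, the same argument applies with the appropriate rounding, since the threshold has ample slack.
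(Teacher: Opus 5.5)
Your proposal is correct and follows the paper's proof essentially step for step: chain the primal and dual bounds to get $|A_{i+1}|\le c^{d+1}|W_i|^{d^2}$, absorb the ceiling multiplicatively via $(1+1/x)^{d^2}\le e^{1/2}\le 2$ for $x\ge 2d^2$ (which is exactly what the threshold $4c^{d+1}(2d^2)^{d^2}\log|A|$ enforces), and conclude halving per iteration. One small correction to your aside on $|A|=1$: there the threshold is $4c^{d+1}(2d^2)^{d^2}\log 1 = 0$, so the loop does run (indeed forever, since the sample is all of $A_{\mathrm{cur}}$ and $A_{\mathrm{cur}}$ never shrinks), so this case is not trivial but must be excluded by assuming $|A|\ge 2$ --- the paper tacitly makes the same assumption when it uses $\log|A|\ge 1$, and the claimed bound $\log|A|-1=-1$ could not hold there anyway.
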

\begin{proof}
    Let \(A_i\) and \(B_i\) be the values of \(A_{\mathrm{cur}}\) and \(B_{\mathrm{cur}}\) at the beginning of the \(i\)th iteration of the main loop. Thus, \(A_1 = A\) and \(B_1=B\).
    
    With this new sample size, together with our approach of finding near-twins on each color class once and polynomial neighborhood complexity, we get the following reduction of the size of the graph in step $i$.
    \[|A_{i+1}| \leq c \cdot (c \cdot |W_{i}|^d)^d = c^{d+1} \big\lceil\left(\frac{|A_{i}|}{4c^{d+1}}\right)^{\tfrac{1}{d^2}}\big\rceil^{d^2},\]
   where $W_i$ denotes the sample in iteration $i$.

    Now let $x_i = \left(\frac{|A_{i}|}{4c^{d+1}}\right)^{\tfrac{1}{d^2}}$, so $|W_i| \leq x_i + 1$. Then
    \[|A_{i+1}| \leq c^{d+1} \cdot (x_i + 1)^{d^2} = c^{d+1} \cdot x_i^{d^2} \cdot (1 + \tfrac{1}{x_i})^{d^2} = c^{d+1} \cdot \frac{|A_{i}|}{4c^{d+1}} \cdot (1 + \tfrac{1}{x_i})^{d^2} = \frac{|A_{i}|}{4} \cdot (1 + \tfrac{1}{x_i})^{d^2}.\]

    Thus, to ensure $|A_{i+1}| \leq \tfrac{|A_{i}|}{2}$, it is sufficient to require $(1+\tfrac{1}{x_i})^{d^2} \leq 2$, which is true for $x_i \geq 2d^2$, as 
    \[(1 + \tfrac{1}{2d^2})^{d^2} = \left((1 + \tfrac{1}{2d^2})^{2d^2}\right)^{1/2} \leq e^{1/2} \leq 2.\]

    We enforce $x_i \geq 2d^2$ by increasing the loop invariant to $4c^{d+1}(2d^2)^{d^2} \cdot \log |A|$. We know that $|A_{i}|$ is greater than that value in each iteration. Therefore, we have
    \[x_i > \left(\frac{4c^{d+1}(2d^2)^{d^2} \cdot \log |A|}{4c^{d+1}}\right)^{\tfrac{1}{d^2}} = \left((2d^2)^{d^2} \cdot \log |A|\right)^{\tfrac{1}{d^2}} \geq \left((2d^2)^{d^2}\right)^{\tfrac{1}{d^2}} = 2d^2.\]

    Putting it all together, we get
    \[|A_{i+1}| \leq \frac{|A_{i}|}{4} \cdot (1 + \tfrac{1}{x_i})^{d^2} \leq \frac{|A_{i}|}{4} \cdot (1 + \tfrac{1}{2d^2})^{d^2} \leq \frac{|A_{i}|}{4} \cdot e^{1/2} \leq \frac{|A_{i}|}{2}.\]

    With this reduction in each iteration, together with the loop invariant, we bound the number of iterations to $\log |A| - \log \left(4 \cdot c^{d+1} \cdot (2d^2)^{d^2} \cdot \log |A|\right)\leq \log |A| - 1$, which proves the claim.
\end{proof}

\subsection{Failure probability for higher VC-dimensions}\label{section:probability_d_gt_1}
\begin{lemma}[Extension of \Cref{lemma:missing_large_set_probability} for higher VC-dimensions]
    Given a set system $G = (A, B, E)$ with $\max\{\pi_G(k), \pi^*_G(k)\} \leq c \cdot k^d$ for $c \geq 1$, $d \geq 2$, and $k \in \mathbb{N}$.
    Let $X\subseteq A$ with $|X|\ge 12c \cdot d \cdot N^{1-\tfrac{1}{d^2}} \cdot \log N$ for some \(N \geq |A|\).
    For $W \subseteq A$ sampled uniformly at random of size $|W| = \lceil \left( \frac{|A|}{4c^{d+1}} \right)^{\tfrac{1}{d^2}}\rceil$, we have
    \[
        \Pr\!\left[\,X\cap W=\emptyset\,\right]\ \le\ N^{-3d}.
    \]
\end{lemma}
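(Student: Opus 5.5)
The plan is to repeat the argument of \Cref{lemma:missing_large_set_probability} verbatim up to the exponential bound, and then redo only the final arithmetic with the new sample size and the new threshold on $|X|$. Since $W$ is a uniformly random subset of fixed size, the same hypergeometric computation gives
\[
\Pr[X\cap W=\emptyset]=\frac{\binom{|A|-|X|}{|W|}}{\binom{|A|}{|W|}}\le\Bigl(1-\frac{|X|}{|A|}\Bigr)^{|W|}\le\exp\Bigl(-\frac{|W|\,|X|}{|A|}\Bigr).
\]
If $|A|-|X|<|W|$, the binomial in the numerator is zero and the bound holds trivially. It therefore suffices to show $\frac{|W||X|}{|A|}\ge 3d\log N$. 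The conclusion then follows as before, because $\exp(-3d\log N)=N^{-3d/\ln 2}\le N^{-3d}$.

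For the exponent, I treat the two factors separately.
\begin{itemize}
\item First, since $N\ge|A|$ and $1-\tfrac1{d^2}>0$, we have $N^{1-1/d^2}\ge|A|^{1-1/d^2}$. The hypothesis on $X$ then gives
\[
\frac{|X|}{|A|}\ge 12cd\,|A|^{-1/d^2}\log N .
\]
\item Second, dropping the ceiling,
\[
|W|\ge |A|^{1/d^2}\,(4c^{d+1})^{-1/d^2}.
\]
\end{itemize}
Multiplying, the factors $|A|^{\pm 1/d^2}$ cancel and we obtain
\[
\frac{|W||X|}{|A|}\ge \frac{12cd\log N}{(4c^{d+1})^{1/d^2}}.
\]

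The only step needing a small check is bounding the constant $(4c^{d+1})^{1/d^2}$. Using $d\ge2$ and $c\ge1$:
\begin{itemize}
\item $4^{1/d^2}\le 4^{1/4}=\sqrt2$;
\item $c^{(d+1)/d^2}\le c^{3/4}\le c$, because $(d+1)/d^2\le 3/4$ for $d\ge 2$.
\end{itemize}
Hence $(4c^{d+1})^{1/d^2}\le \sqrt2\,c\le 4c$, and therefore
\[
\frac{|W||X|}{|A|}\ge \frac{12cd\log N}{4c}=3d\log N,
\]
which is exactly what is needed. I do not expect a real obstacle here. The point to be careful about is that the cancellation of $|A|^{1/d^2}$ uses $N\ge|A|$ in the right direction, and it does, since $N^{1-1/d^2}$ appears in a lower bound on $|X|$.
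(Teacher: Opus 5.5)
Your proposal is correct and follows essentially the same route as the paper: the hypergeometric bound $\exp(-|W||X|/|A|)$, the estimate $(4c^{d+1})^{1/d^2}\le 4c$, and $N^{1-1/d^2}\ge |A|^{1-1/d^2}$ to cancel the powers of $|A|$ and reach $\exp(-3d\log N)\le N^{-3d}$. You are slightly more careful than the paper in handling the case where the numerator binomial vanishes and in obtaining the sharper constant $\sqrt{2}\,c$, though neither changes the argument.
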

\begin{proof}
As before, since $W$ is chosen uniformly among all subsets of this size, we have
    \[
        \Pr[X \cap W = \emptyset]
        \;\le\;
        \exp\!\left(-\frac{|W|}{|A|}|X|\right).
    \]

    We first compute a bound for $\frac{|W|}{|A|}$ for $|W| = \big\lceil\left( \frac{|A|}{4c^{d+1}} \right)^{\tfrac{1}{d^2}} \big\rceil$.
    \[
    \frac{|W|}{|A|} 
    \geq \frac{1}{|A|} \cdot \left( \frac{|A|}{4c^{d+1}} \right)^{\tfrac{1}{d^2}}
    = \frac{1}{|A|} \cdot \frac{|A|^{\tfrac{1}{d^2}} }{(4c^{d+1})^{\tfrac{1}{d^2}}}
    \geq \frac{1}{|A|} \cdot \frac{|A|^{\tfrac{1}{d^2}} }{4c}
    = \frac{|A|^{\tfrac{1}{d^2}-1} }{4c}
    = \frac{1}{4c\cdot |A|^{1-\tfrac{1}{d^2}}}.
    \]
    
    Since $1/d^2 \leq 1$ and $(d+1)/d^2 \leq 1$ for $d \geq 2$, and both $4 \geq 1$ and $c \geq 1$, we have $4^{1/d^2} \leq 4$ and $c^{(d+1)/d^2} \leq c$, giving $(4c^{d+1})^{\tfrac{1}{d^2}} \le 4c$.

    With $|X| \ge 12c \cdot d \cdot N^{1-\tfrac{1}{d^2}} \cdot \log N$ and $N \geq |A|$, we obtain
    \begin{align*}
        \Pr[X \cap W = \emptyset]
        \;\le\;
        \exp\!\left(-\frac{|W|}{|A|}|X|\right)
        \;&\leq\;
        \exp\!\left(-\frac{1}{4c \cdot |A|^{1 - \tfrac{1}{d^2}}}\cdot |X|\right)
        \;\\&=\;
        \exp\!\left(-\frac{1}{4c \cdot |A|^{1 - \tfrac{1}{d^2}}} \cdot 12c \cdot d \cdot  N^{1 - \tfrac{1}{d^2}} \log N\right)
        \;\\&\leq\;
        \exp\!\left(-3 \cdot d \cdot \log N\right)
        \;\\&=\;
        N^{-\tfrac{3d}{\ln 2}},
        \;\\&\leq\;
        N^{-3d}.
    \end{align*}
    This proves the claim.
\end{proof}

The proof of \Cref{lemma:twin_class_occurrence_probability} carries over almost verbatim: as we have polynomial neighborhood complexity, we may restrict to representatives of the twin classes of \(B\) and assume $|B| \leq c\cdot |A|^d$ for $c \ge 1$ and $d \ge 2$. Hence, \(|B|^2 \leq c^2 \cdot |A|^{2d}\),
and the union bound gives a per-iteration failure probability of at most \(c^2 \cdot |A|^{2d} \cdot N^{-3d} \leq c^2 \cdot N^{-d} \le c^2/N\) (using \(N \geq |A|\) and \(d \ge 2\)).

With the modified analogue of \Cref{lemma:twin_class_occurrence_probability} yielding the same per-iteration bound \(c^2/N\) as before, the proof of \Cref{theorem:probability_of_error} carries over verbatim when replacing \Cref{theorem:iteration_bound} by \Cref{theorem:iteration_bound_poly}.

\bibliography{icalp}
\bibliographystyle{plainurl}
\end{document}